\documentclass[final,5p,times,twocolumn,nopreprintline]{elsarticle}
\usepackage{amsmath,amssymb,amsthm,mathtools,mathrsfs,booktabs}
\usepackage{microtype}
\usepackage{xurl}
\usepackage[hidelinks]{hyperref}
\makeatletter
\let\ps@pprintTitle\ps@plain
\makeatother
\newtheorem{theorem}{Theorem}
\newtheorem{lemma}[theorem]{Lemma}

\newtheorem{corollary}[theorem]{Corollary}
\theoremstyle{definition}
\newtheorem{definition}[theorem]{Definition}
\theoremstyle{remark}
\newtheorem{remark}[theorem]{Remark}
\newcommand{\R}{\mathbb R}
\newcommand{\V}{\mathcal V}
\newcommand{\Vwa}{\mathcal V^{\mathrm{wa}}}
\newcommand{\Vwm}{\mathcal V^{\mathrm{wm}}}
\newcommand{\one}{\mathbf 1}
\newcommand{\wa}{\mathrm{wa}}

\DeclareMathOperator{\Fix}{Fix}
\DeclareMathOperator{\Med}{Med}
\DeclareMathOperator*{\argmin}{arg\,min}
\DeclareMathOperator*{\argmax}{arg\,max}
\begin{document}
\begin{frontmatter}
\title{Equilibria and consensus in discrete-time opinion dynamics with mixed weighted-averaging and weighted-median mechanisms}
\author[pku]{Junkai Cheng}
\ead{junkai20031004@gmail.com}
\author[pku]{Wenjun Mei\corref{cor1}}
\ead{mei@pku.edu.cn}
\cortext[cor1]{Corresponding author.}
\address[pku]{School of Advanced Manufacturing and Robotics, Peking University,\\
ENN Engineering Building, No.~116 Zhongguancun North Avenue, Haidian District, Beijing 100871, China}
\begin{abstract}
Opinion dynamics models relate interpersonal influence to consensus or persistent disagreement. We study a discrete-time model combining weighted-average and weighted-median updates on a fixed directed network. Individuals update synchronously with possibly different adjustment coefficients. We characterize equilibria through local balance conditions and opinion level sets, and establish convergence to a Lyapunov-stable equilibrium using nonexpansiveness and Krasnosel'skii--Mann theory. We define type-dependent cohesive sets: each weighted-average member assigns all its weight to the set, whereas each weighted-median member assigns at least one half. Consensus is reached from every initial condition if and only if no two such sets are disjoint. Otherwise, an open set of initial conditions leads to persistent disagreement. Replacing weighted-median individuals by weighted-average individuals preserves consensus whenever it is already guaranteed for all initial conditions. Four numerical studies examine the number and locations of weighted-average individuals. The frequency of reaching a prescribed consensus tolerance is nondecreasing with their fraction in the sampled ensembles, but individual realizations need not follow this trend.
\end{abstract}
\begin{keyword}
Opinion dynamics \sep Discrete-time systems \sep Heterogeneous mechanisms \sep Weighted median \sep Consensus \sep Cohesive sets \sep Nonexpansive mappings
\end{keyword}
\end{frontmatter}

\section{Introduction}\label{sec:introduction}
\subsection{Background and motivation}
Opinion dynamics models describe how interpersonal influence shapes collective agreement or disagreement. Weighted averaging is a classical mechanism: it minimizes a weighted sum of squared opinion differences. In contrast, the weighted-median mechanism minimizes a weighted sum of absolute differences. This change allows opinion clusters to persist without bounded confidence or negative influence weights~\cite{mei2022}.

The different behaviors of the two mechanisms raise a natural question: what happens when individuals using these rules interact on the same network? We study this question through a synchronous discrete-time model with weighted-average (WA) and weighted-median (WM) individuals. We characterize equilibria and convergence, and determine how network weights and the locations of the two types govern consensus.

\subsection{Literature review}
\paragraph{Averaging-based models}
In DeGroot's model~\cite{degroot1974}, individuals repeatedly average their neighbors' opinions, and consensus depends on the influence network. Several extensions explain how disagreement can persist. The Friedkin--Johnsen model~\cite{friedkin1990} retains attachment to initial opinions. The Hegselmann--Krause model~\cite{hegs2002} averages opinions within a confidence range, whereas the Deffuant--Weisbuch model~\cite{deffuant2000} uses bounded-confidence interactions between randomly selected pairs. Altafini's model~\cite{altafini2013} allows antagonistic influence through signed network weights. Our model instead combines two update rules on a fixed network with nonnegative weights, without attachment to fixed prejudices.

\paragraph{Heterogeneity within averaging models}
Heterogeneity also appears within averaging-based models. Mirtabatabaei and Bullo~\cite{mirtabatabaei2012} allow individual confidence or influence ranges and analyze equilibria and convergence using invariant neighborhoods. Li~\cite{li2022} studies stability and consensus in mixed Hegselmann--Krause dynamics with individual, time-varying weights on current opinions. These models vary interaction ranges or adjustment parameters. Here heterogeneity also concerns the mechanism: some individuals take weighted averages, whereas others take weighted medians.

\paragraph{Weighted-median models}
Mei et al. derive weighted-median updates by minimizing weighted absolute opinion differences~\cite{mei2022} and analyze equilibria, convergence, consensus, and dissensus under random asynchronous updating using cohesive sets~\cite{mei2024}. Zhang et al.~\cite{zhang2025} study synchronous weighted-median dynamics with fixed prejudices; Han et al.~\cite{han2025} consider continuous-time dynamics; and Chen et al.~\cite{chen2026} introduce higher-order interactions. In our model, WA and WM individuals update synchronously and retain part of their current opinions. To characterize groups whose members are guaranteed, for every initial opinion vector, to remain within the interval spanned by the group's initial opinions, we use type-dependent cohesion conditions: each WM member assigns at least half of its weight to the set, whereas each WA member assigns all of its weight to the set.

\paragraph{Fixed-point methods for convergence and consensus}
To analyze these mixed dynamics, we use established fixed-point results. Akian, Gaubert, and Hochart~\cite[Theorem~3.1]{akian2020} characterize fixed-point uniqueness, up to additive constants, for monotone additively homogeneous maps. We specialize their criterion by identifying the conditions on influence weights for each node type: full internal weight for WA members and at least half internal weight for WM members. These are different requirements on rows of the same influence matrix, not two separate weight matrices. An auxiliary map accommodates unequal adjustment coefficients, allowing us to apply Krasnosel'skii--Mann convergence theory~\cite{bravo2019}. The task for the present mixed model is to connect these general results to explicit network conditions for the coexistence of WA and WM individuals.

\subsection{Statement of contributions}
The main contributions are summarized as follows.

First, we formulate a discrete-time model combining WA and WM individuals on a directed weighted network. WM individuals select the weighted median closest to their current opinions, and all individuals update synchronously with individual adjustment coefficients. Heterogeneity thus concerns both update mechanisms and adjustment parameters.

Second, we characterize equilibria through local balance conditions and opinion level sets, and prove convergence to a Lyapunov-stable equilibrium. We define type-dependent cohesive sets and show that consensus follows from every initial condition if and only if no two such sets are disjoint, specializing the general fixed-point criterion. Otherwise, an open set of initial conditions leads to dissensus. Replacing WM individuals by WA individuals preserves this guarantee of consensus whenever it already holds.

Third, four numerical studies compare WA locations, influence weights, fractions, and initial conditions against homogeneous baselines. The frequency of reaching a prescribed consensus tolerance is nondecreasing with WA fraction in the sampled ensembles. Individual realizations need not follow this trend, and hitting times differ even among configurations guaranteeing consensus.

\paragraph{Organization}
The remainder of the paper is organized as follows. Section~\ref{sec:model} introduces the model and problems of interest. Section~\ref{sec:theory} presents the theoretical analysis. Section~\ref{sec:numerics} reports the numerical studies. Section~\ref{sec:conclusion} concludes the paper.

\section{Notation and model setup}\label{sec:model}
We define the influence network and local response rules, then introduce the dynamics and the problems of interest.

\subsection{Network and local responses}
Consider $n\geq 2$ individuals indexed by $\V=\{1,\ldots,n\}$, interacting over a directed weighted influence network with nonnegative row-stochastic matrix $A=(a_{ij})\in\R^{n\times n}$:
\begin{equation}\label{eq:stochastic}
 a_{ij}\geq 0,\qquad \sum_{j\in\V}a_{ij}=1\quad(i\in\V).
\end{equation}
The weight $a_{ij}$ represents the influence of individual $j$ on individual $i$. We draw a directed edge $(i,j)$ when $a_{ij}>0$, so an edge from $i$ to $j$ means that $i$ listens to $j$. Self-weights are allowed. For $S\subseteq\V$, write
\begin{equation}\label{eq:mass}
 a_i(S)=\sum_{j\in S}a_{ij}.
\end{equation}
Thus $a_i(S)$ is the weight assigned to $S$ by individual $i$. Each node represents an individual with opinion $x_i(k)$ at time $k\in\mathbb N_0$; write $x(k)\in\R^n$ for the opinion vector. Let $\one$ denote the vector of ones. Vector inequalities are componentwise, and $\|x\|_\infty=\max_i|x_i|$. For $G:\R^n\to\R^n$, write $\Fix(G)=\{x\in\R^n:G(x)=x\}$.

Let $\Vwa$ and $\Vwm$ form a fixed disjoint partition of $\V$, representing weighted-average (WA) and weighted-median (WM) individuals, respectively. Either subset may be empty. For a WM individual, define the weighted-median interval
\begin{equation}\label{eq:medianinterval}
 \mathcal M_i(x;A)
 =\argmin_{z\in\R}\sum_{j\in\V}a_{ij}|z-x_j|
 =[\ell_i(x;A),u_i(x;A)].
\end{equation}
Thus the weighted-median interval is the entire minimizer set of the weighted absolute-error cost; $\ell_i(x;A)$ and $u_i(x;A)$ are its lower and upper endpoints. Both are finite because the weights sum to one. The selected response is the projection of the individual's current opinion onto this interval, i.e., its closest point:
\begin{equation}\label{eq:medianselection}
 \Med_i(x;A)
 =\max\{\ell_i(x;A),\min\{x_i,u_i(x;A)\}\}.
\end{equation}
If $x_i$ lies in the weighted-median interval, it is retained; otherwise, the nearest endpoint is selected. When the weighted-median interval is not a singleton, the selected value need not coincide with a neighbor's opinion.

The combined response map $H:\R^n\to\R^n$ is
\begin{equation}\label{eq:response}
 H_i(x)=
 \begin{cases}
 \displaystyle\sum_{j\in\V}a_{ij}x_j,&i\in\Vwa,\\[2pt]
 \Med_i(x;A),&i\in\Vwm.
 \end{cases}
\end{equation}
The WA response uniquely minimizes $\sum_j a_{ij}(z-x_j)^2$, whereas WM minimizes the weighted absolute-error cost~\eqref{eq:medianinterval}. Figure~\ref{fig:model} illustrates these different aggregation mechanisms and their coexistence on a network.

\begin{figure*}[t]
 \centering
 \includegraphics[width=\textwidth]{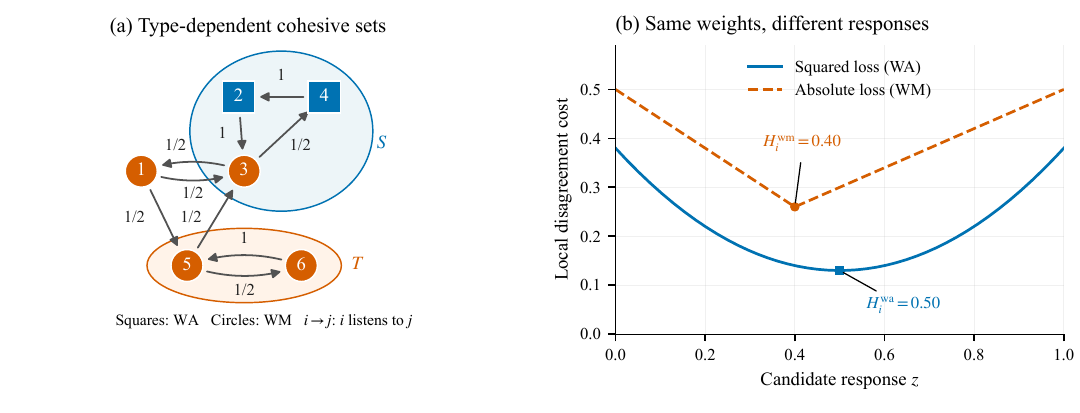}
 \caption{Weighted-average and weighted-median mechanisms. (a) Six-node directed network: blue squares $\{2,4\}$ represent WA individuals and orange circles represent WM individuals. Each arrow $i\to j$ means that $i$ listens to $j$, with weight $a_{ij}$. The shaded sets $S=\{2,3,4\}$ and $T=\{5,6\}$ are type-dependent cohesive (Definition~\ref{def:cohesion}): each WA member assigns all its weight to its set, and each WM member assigns at least one half. (b) Local costs for opinions $(0,0.4,1)$ with weights $(0.2,0.5,0.3)$: the solid quadratic curve is minimized at $0.5$, and the dashed absolute-error curve at $0.4$. Markers indicate the minimizing responses.}\label{fig:model}
\end{figure*}

\subsection{Discrete-time dynamics and problem statement}
For fixed adjustment coefficients $\gamma=(\gamma_1,\ldots,\gamma_n)^\top\in(0,1)^n$, consider the synchronous discrete-time system
\begin{equation}\label{eq:system}
 x_i(k+1)=F_i(x(k))
 =(1-\gamma_i)x_i(k)+\gamma_i H_i(x(k)),\quad i\in\V.
\end{equation}
All individuals compute their responses from the same opinion vector $x(k)$. The coefficient $\gamma_i$ controls adjustment toward the response; $1-\gamma_i$ weights the current opinion, not a fixed prejudice. We impose the following standing assumptions:
\begin{enumerate}
\renewcommand{\theenumi}{A\arabic{enumi}}
\renewcommand{\labelenumi}{(\theenumi)}
\item\label{ass:network} $A$ is fixed and satisfies~\eqref{eq:stochastic}; the influence weights are constant, and no connectivity assumption is imposed.
\item\label{ass:partition} $\V=\Vwa\sqcup\Vwm$ is fixed; mechanisms do not switch along a trajectory.
\item\label{ass:selection} Every WM response uses~\eqref{eq:medianselection}; this specifies all median ties uniquely.
\item\label{ass:relaxation} Updates~\eqref{eq:system} are synchronous, with fixed $\gamma_i\in(0,1)$ and arbitrary $x(0)\in\R^n$; the restriction on $\gamma_i$ is used to establish convergence.
\end{enumerate}
We suppress fixed parameters in function arguments, including those of $H$, $F$, $a_i(S)$, and the residual defined below. We display the network and update-rule assignment when comparing different configurations.

A vector $x^\star\in\R^n$ is an equilibrium if $F(x^\star)=x^\star$. The system reaches consensus asymptotically if $x(k)\to c\one$ for some $c\in\R$. A nonconstant equilibrium is called a dissensus equilibrium; convergence to such an equilibrium represents persistent dissensus, or persistent disagreement. Define
\begin{align}
 m(x)&=\min_{i\in\V}x_i,& M(x)&=\max_{i\in\V}x_i,\nonumber\\
 D(x)&=M(x)-m(x).&&\label{eq:diameter}
\end{align}
We distinguish consensus for a particular initial condition from consensus for \emph{every} $x(0)\in\R^n$.

\paragraph{Problem statement}
Under Assumptions~\ref{ass:network}--\ref{ass:relaxation}, we study (i) well-posedness and boundedness; (ii) equilibria and stability; (iii) convergence; and (iv) conditions on the network and assignment of update rules that guarantee consensus from every initial condition, and conditions that permit persistent disagreement for some initial conditions. These questions have different quantifiers: absence of a guarantee for every initial condition does not mean disagreement for every initial condition. Convergence to equilibrium need not imply consensus.

\section{Theoretical analysis}\label{sec:theory}
We establish boundedness, characterize equilibria, and prove stability and convergence before deriving the consensus criterion.

\subsection{Order properties and boundedness}
To establish boundedness, we show that the response and update maps preserve order and constant vectors. These properties will also give the nonexpansive estimate used in the convergence analysis. We begin with the condition for a value to belong to the weighted-median interval:
\begin{equation}\label{eq:mediancondition}
 z\in\mathcal M_i(x;A)
 \ \Longleftrightarrow\ 
 \begin{cases}
 \displaystyle\sum_{j:x_j<z}a_{ij}\leq\tfrac12,\\[2pt]
 \displaystyle\sum_{j:x_j>z}a_{ij}\leq\tfrac12.
 \end{cases}
\end{equation}
This follows from the one-sided derivatives of the absolute-error cost~\cite{mei2022,mei2024}. Equality allows exactly half of the weight on either side of $z$.

\begin{lemma}[Order and interval properties]\label{lem:order}
Consider the maps $H$ and $F$ in~\eqref{eq:response}--\eqref{eq:system} under Assumptions~\ref{ass:network}--\ref{ass:relaxation}. Both maps are monotone and additively homogeneous: for $G\in\{H,F\}$, $x,y\in\R^n$, and $c\in\R$,
\begin{equation}\label{eq:ordertranslation}
 x\leq y\ \Longrightarrow\ G(x)\leq G(y),\qquad
 G(x+c\one)=G(x)+c\one.
\end{equation}
Moreover, $G(c\one)=c\one$, and every trajectory satisfies
\begin{equation}\label{eq:hull}
 m(x(k))\leq x_i(k+1)\leq M(x(k))\quad(i\in\V).
\end{equation}
Consequently, $m(x(k))$ is nondecreasing, $M(x(k))$ is nonincreasing, and all opinions remain in the initial opinion interval.
\end{lemma}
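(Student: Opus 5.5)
The plan is to prove the two properties in~\eqref{eq:ordertranslation} first for each component of $H$, then transfer them to $F$, and finally read off $G(c\one)=c\one$ and~\eqref{eq:hull}.

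\textbf{WA components.} For $i\in\Vwa$ the response $H_i(x)=\sum_j a_{ij}x_j$ is linear with nonnegative coefficients summing to one. Monotonicity and additive homogeneity are therefore immediate from~\eqref{eq:stochastic}.

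\textbf{WM components.} For $i\in\Vwm$ I would first study the endpoints $\ell_i,u_i$, using the characterization~\eqref{eq:mediancondition}. Describe the interval as
\[
\ell_i(x)=\min\Bigl\{z:\textstyle\sum_{j:x_j>z}a_{ij}\le\tfrac12\Bigr\},
\qquad
u_i(x)=\max\Bigl\{z:\textstyle\sum_{j:x_j<z}a_{ij}\le\tfrac12\Bigr\}.
\]
\begin{itemize}
\item \emph{Attainment.} The sets are closed: $z\mapsto\sum_{x_j>z}a_{ij}$ is right-continuous and nonincreasing, and symmetrically for the other sum. So the min and max are attained.
\item \emph{Monotonicity of endpoints.} If $x\le y$, then for every $z$ we have $\{j:y_j>z\}\supseteq\{j:x_j>z\}$ and $\{j:y_j<z\}\subseteq\{j:x_j<z\}$. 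The feasible set defining $\ell_i$ shrinks, so $\ell_i(x)\le\ell_i(y)$. The feasible set defining $u_i$ grows, so $u_i(x)\le u_i(y)$.
\item \emph{Translation.} Shifting $x$ by $c\one$ shifts both index sets consistently with $z\mapsto z+c$. Hence $\ell_i(x+c\one)=\ell_i(x)+c$, and likewise for $u_i$.
\item \emph{Selection rule.} $\Med_i(x)=\max\{\ell_i,\min\{x_i,u_i\}\}$ is a composition of max and min of three quantities $\ell_i(x)$, $x_i$, $u_i(x)$, each monotone and translation-equivariant in $x$. Since max and min are themselves monotone and commute with adding a constant, both properties pass to $\Med_i$.
\end{itemize}

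\textbf{Transfer to $F$.} $F_i=(1-\gamma_i)x_i+\gamma_iH_i$ is a convex combination of two monotone, additively homogeneous maps, since $\gamma_i\in(0,1)$. It therefore inherits both properties.

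\textbf{Constants.} I would check $H(0)=0$. For WA rows this holds because the average of zeros is zero. For WM rows, at $x=0$ the condition~\eqref{eq:mediancondition} holds only at $z=0$: any $z\neq 0$ has total weight $1>\tfrac12$ on one side. So the median interval is $\{0\}$ and $\Med_i(0)=0$. Then $F(0)=0$, and additive homogeneity gives $G(c\one)=c\one$.

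\textbf{Hull inequality.} Since $m(x)\one\le x\le M(x)\one$, monotonicity together with the constant property gives
\[
m(x)\one=F(m(x)\one)\le F(x)\le F(M(x)\one)=M(x)\one.
\]
This is~\eqref{eq:hull}. Taking the min and max over $i$ gives the monotonicity of $m(x(k))$ and $M(x(k))$, and induction confines opinions to $[m(x(0)),M(x(0))]$.

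\textbf{Main obstacle.} The delicate step is monotonicity of the endpoints $\ell_i,u_i$ in the presence of ties and the non-strict inequalities in~\eqref{eq:mediancondition}. The point to watch is that the feasible sets defining the endpoints are closed, so the extremes are attained and the comparison is rigorous. This should be handled by the set-inclusion argument above, together with the right- and left-continuity of the cumulative weight functions.
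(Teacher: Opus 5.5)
Your proof is correct and follows essentially the same route as the paper: set-inclusion arguments give monotone, translation-equivariant median endpoints; these properties pass through the $\max$/$\min$ selection and the convex combination defining $F$; and monotonicity is then applied to $m(x)\one\le x\le M(x)\one$. The differences are cosmetic. The paper writes the endpoints as infima of level sets of the cumulative weight $C_i(t;x)=\sum_{j:x_j\le t}a_{ij}$, using thresholds $\ge 1/2$ and $>1/2$, and evaluates constants directly at $c\one$. You instead use the equivalent closed one-sided sets from~\eqref{eq:mediancondition} and check $H(0)=0$ before translating.
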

\begin{proof}
For $i\in\Vwa$, nonnegative normalized weights give, whenever $x\leq y$,
\[
 \sum_j a_{ij}x_j\leq\sum_j a_{ij}y_j,\qquad
 \sum_j a_{ij}(x_j+c)=\sum_j a_{ij}x_j+c.
\]
For $i\in\Vwm$, define the cumulative weight
\[
 C_i(t;x)=\sum_{j:x_j\leq t}a_{ij},\qquad t\in\R.
\]
By~\eqref{eq:mediancondition}, the endpoints are
\[
 \begin{aligned}
 \ell_i(x;A)&=\inf\{t:C_i(t;x)\geq1/2\},\\
 u_i(x;A)&=\inf\{t:C_i(t;x)>1/2\}.
 \end{aligned}
\]
The strict inequality in the second definition is important: if the cumulative weight reaches exactly one half, the upper endpoint is the next opinion with positive weight. For $x\leq y$, the inclusion $\{j:y_j\leq t\}\subseteq\{j:x_j\leq t\}$ gives $C_i(t;x)\geq C_i(t;y)$ for every $t$. Thus, for either relation $\diamond\in\{\geq,>\}$,
\[
 \{t:C_i(t;y)\diamond1/2\}
 \subseteq\{t:C_i(t;x)\diamond1/2\}.
\]
Taking infima yields $\ell_i(x;A)\leq\ell_i(y;A)$ and $u_i(x;A)\leq u_i(y;A)$. Both $\min$ and $\max$ are nondecreasing in each argument, so~\eqref{eq:medianselection} gives
\[
 \begin{aligned}
 \Med_i(x;A)
 &\leq\max\{\ell_i(y;A),\min\{y_i,u_i(y;A)\}\}\\
 &=\Med_i(y;A).
 \end{aligned}
\]
Translation satisfies $C_i(t;x+c\one)=C_i(t-c;x)$; hence
\[
 \begin{aligned}
 \ell_i(x+c\one;A)&=\ell_i(x;A)+c,\\
 u_i(x+c\one;A)&=u_i(x;A)+c,\\
 \Med_i(x+c\one;A)&=\Med_i(x;A)+c.
 \end{aligned}
\]
At $x=c\one$, both endpoints and both response rules equal $c$. This proves the claims for $H$. For $F$, monotonicity follows componentwise from
\[
 (1-\gamma_i)x_i+\gamma_iH_i(x)
 \leq(1-\gamma_i)y_i+\gamma_iH_i(y).
\]
Substituting $x+c\one$ into~\eqref{eq:system}, the added terms are $(1-\gamma_i)c+\gamma_i c=c$, proving translation and preservation of constants for $F$.

Finally, apply monotonicity to $m(x)\one\leq x\leq M(x)\one$. For $G\in\{H,F\}$,
\[
 m(x)\one=G(m(x)\one)\leq G(x)
 \leq G(M(x)\one)=M(x)\one.
\]
With $G=F$ and $x=x(k)$, the smallest and largest components satisfy
\[
 m(x(k))\leq m(x(k+1))\leq M(x(k+1))\leq M(x(k)).
\]
Iteration keeps every component in $[m(x(0)),M(x(0))]$ and proves boundedness.
\end{proof}

\subsection{Equilibria and type-dependent cohesion}
We first give local equilibrium conditions. We then introduce opinion level sets to express these conditions through the influence weight assigned to groups holding extreme or intermediate opinions. This characterization involves the opinion vector, network weights, and node types; it is not a criterion based on the network alone.

\begin{theorem}[Local equilibrium conditions]\label{thm:nodeequilibrium}
Consider system~\eqref{eq:system} under Assumptions~\ref{ass:network}--\ref{ass:relaxation}. A vector $x^\star\in\R^n$ is an equilibrium if and only if
\begin{align}
 x_i^\star&=\sum_{j\in\V}a_{ij}x_j^\star,
 &&i\in\Vwa,\label{eq:waequilibrium}\\
 \sum_{j:x_j^\star<x_i^\star}a_{ij}&\leq\tfrac12,
 \quad
 \sum_{j:x_j^\star>x_i^\star}a_{ij}\leq\tfrac12,
 &&i\in\Vwm.\label{eq:wmequilibrium}
\end{align}
In particular, the equilibrium set is independent of the values of $\gamma_i\in(0,1)$ and contains every consensus vector.
\end{theorem}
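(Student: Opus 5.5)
The plan is to reduce the fixed-point equation $F(x^\star)=x^\star$ to the fixed-point equation of the response map $H$, and then read off the two cases of~\eqref{eq:response} separately. Since $\gamma_i\in(0,1)$, the $i$th component of~\eqref{eq:system} gives
\[
 F_i(x^\star)-x_i^\star=\gamma_i\bigl(H_i(x^\star)-x_i^\star\bigr),
\]
so $F_i(x^\star)=x_i^\star$ holds if and only if $H_i(x^\star)=x_i^\star$, because $\gamma_i\neq 0$. Hence $\Fix(F)=\Fix(H)$. The map $H$ does not involve $\gamma$, so this identity already gives the independence of the equilibrium set from the values $\gamma_i\in(0,1)$.

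For $i\in\Vwa$, the condition $H_i(x^\star)=x_i^\star$ is literally~\eqref{eq:waequilibrium}. For $i\in\Vwm$, I will show that $\Med_i(x^\star;A)=x_i^\star$ holds if and only if $x_i^\star\in\mathcal M_i(x^\star;A)$. By~\eqref{eq:medianselection}, $\Med_i$ is the projection of $x_i^\star$ onto the closed interval $[\ell_i,u_i]$. Such a projection returns $x_i^\star$ exactly when $x_i^\star$ lies in the interval: if $x_i^\star<\ell_i$ the output is $\ell_i\neq x_i^\star$, if $x_i^\star>u_i$ the output is $u_i\neq x_i^\star$, and otherwise it is $x_i^\star$ itself. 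Membership $x_i^\star\in\mathcal M_i(x^\star;A)$ is then translated into~\eqref{eq:wmequilibrium} by applying the characterization~\eqref{eq:mediancondition} with $z=x_i^\star$ and $x=x^\star$.

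Finally, for consensus vectors, Lemma~\ref{lem:order} gives $F(c\one)=c\one$ directly. As a consistency check against the conditions themselves, at $x^\star=c\one$ the WA equation holds by row-stochasticity, and both sums in~\eqref{eq:wmequilibrium} are empty, hence equal to zero.

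I do not expect a real obstacle. The only delicate point is the use of the closest-point selection rule (Assumption~\ref{ass:selection}): with an arbitrary choice from the median interval, a WM individual sitting inside a nonsingleton interval could be moved, and the equivalence would fail. The projection form~\eqref{eq:medianselection} is exactly what guarantees that every point of the interval is retained.
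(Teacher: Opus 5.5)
Your proposal is correct and follows essentially the same route as the paper's proof. You reduce $F_i(x^\star)=x_i^\star$ to $H_i(x^\star)=x_i^\star$ using $\gamma_i>0$, observe that the projection~\eqref{eq:medianselection} fixes $x_i^\star$ exactly when $x_i^\star\in\mathcal M_i(x^\star;A)$, translate this via~\eqref{eq:mediancondition}, and check constant vectors; you simply spell out the projection cases and the empty sums at $c\one$ in more detail.
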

\begin{proof}
Since $\gamma_i>0$, the equality $F_i(x^\star)=x_i^\star$ is equivalent to $H_i(x^\star)=x_i^\star$. For WA individuals this is~\eqref{eq:waequilibrium}. For WM individuals, the projection in~\eqref{eq:medianselection} fixes $x_i^\star$ exactly when $x_i^\star\in\mathcal M_i(x^\star;A)$. Equation~\eqref{eq:mediancondition} gives~\eqref{eq:wmequilibrium}. Neither condition contains $\gamma_i$, and both hold for constant vectors.
\end{proof}

To describe the network structure associated with these equilibrium conditions, we adapt the notion of a cohesive set~\cite{mei2024} to the two update mechanisms.

\begin{definition}[Partial and type-dependent cohesive sets]\label{def:cohesion}
For the weighted network $A$ and partition $(\Vwa,\Vwm)$, a nonempty set $S\subseteq\V$ is \emph{partially cohesive} if
\begin{equation}\label{eq:partial}
 a_i(S)\geq\tfrac12\qquad\text{for every }i\in S\cap\Vwm.
\end{equation}
It is \emph{type-dependent cohesive} if, in addition,
\begin{equation}\label{eq:special}
 a_i(S)=1\qquad\text{for every }i\in S\cap\Vwa.
\end{equation}
Thus partial cohesion imposes no constraint on the WA members. Let $\mathscr C(A,\Vwa)$ denote the family of nonempty type-dependent cohesive sets, with $\Vwm=\V\setminus\Vwa$ understood.
\end{definition}

Under~\eqref{eq:medianselection}, a WM individual retains an extreme opinion if it assigns at least half of its weight to individuals sharing that opinion. To retain that same extreme opinion under weighted averaging, a WA individual must assign all its weight to individuals holding that extreme opinion. This full-weight requirement concerns preservation of an extreme opinion, not every WA equilibrium equation. Thus, in the all-WA model, members of a type-dependent cohesive set assign no weight outside the set.

For an opinion vector $x$ and a threshold $\theta\in\R$, define
\begin{equation}\label{eq:levelsets}
 \begin{gathered}
 \V_{\geq\theta}(x)=\{i\in\V:x_i\geq\theta\},\quad
 \V_{\leq\theta}(x)=\{i\in\V:x_i\leq\theta\},\\
 \V_{\max}(x)=\argmax_{i\in\V}x_i,\quad
 \V_{\min}(x)=\argmin_{i\in\V}x_i,\\
 \V_{\mathrm{int}}(x)=\{i\in\V:m(x)<x_i<M(x)\}.
 \end{gathered}
\end{equation}
Unlike $\mathscr C(A,\Vwa)$, these sets depend on the opinion vector. Every constant vector is already an equilibrium by Theorem~\ref{thm:nodeequilibrium}, so the next result focuses on nonconstant vectors.

\begin{theorem}[Level-set equilibrium characterization]\label{thm:levelequilibrium}
Consider system~\eqref{eq:system} under Assumptions~\ref{ass:network}--\ref{ass:relaxation}, and let $x^\star\in\R^n$ be nonconstant. With the sets in~\eqref{eq:levelsets} and cohesion defined by Definition~\ref{def:cohesion}, $x^\star$ is an equilibrium if and only if all three conditions hold:
\begin{enumerate}
\item The sets $\V_{\max}(x^\star)$ and $\V_{\min}(x^\star)$ themselves belong to $\mathscr C(A,\Vwa)$.
\item Every $i\in\Vwa\cap\V_{\mathrm{int}}(x^\star)$ satisfies~\eqref{eq:waequilibrium}.
\item For every $i\in\Vwm\cap\V_{\mathrm{int}}(x^\star)$, both $\V_{\geq x_i^\star}(x^\star)$ and $\V_{\leq x_i^\star}(x^\star)$ are partially cohesive.
\end{enumerate}
\end{theorem}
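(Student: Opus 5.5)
The plan is to reduce everything to Theorem~\ref{thm:nodeequilibrium}, which characterizes equilibria node by node through~\eqref{eq:waequilibrium}--\eqref{eq:wmequilibrium}. Since $x^\star$ is nonconstant, $m(x^\star)<M(x^\star)$. The three level sets $\V_{\max}(x^\star)$, $\V_{\min}(x^\star)$, $\V_{\mathrm{int}}(x^\star)$ therefore partition $\V$, with the first two nonempty and disjoint. I will check node by node that, on each piece and for each type, the local condition of Theorem~\ref{thm:nodeequilibrium} matches the corresponding item of the statement. Both directions then follow at once.

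\emph{Extreme nodes.} Take $i\in\V_{\max}(x^\star)$. If $i\in\Vwa$, then $x_i^\star=M(x^\star)\geq x_j^\star$ for all $j$. Combined with $a_{ij}\geq 0$ and row-stochasticity, this gives $x_i^\star-\sum_j a_{ij}x_j^\star=\sum_j a_{ij}(x_i^\star-x_j^\star)$, a sum of nonnegative terms. Hence~\eqref{eq:waequilibrium} holds iff every $j$ with $a_{ij}>0$ lies in $\V_{\max}(x^\star)$, i.e., iff $a_i(\V_{\max}(x^\star))=1$. If $i\in\Vwm$, then $\{j:x_j^\star>x_i^\star\}=\emptyset$, so the second inequality of~\eqref{eq:wmequilibrium} is automatic. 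The first reads $a_i(\V\setminus\V_{\max}(x^\star))\leq\tfrac12$, i.e., $a_i(\V_{\max}(x^\star))\geq\tfrac12$. Over all $i\in\V_{\max}(x^\star)$, these are exactly~\eqref{eq:partial}--\eqref{eq:special} for $S=\V_{\max}(x^\star)$, which is nonempty; so this is membership in $\mathscr C(A,\Vwa)$. The minimum set is handled symmetrically, applying the same argument to $-x^\star$ or swapping the inequalities. This yields item~1.

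\emph{Interior nodes.} For $i\in\Vwa\cap\V_{\mathrm{int}}(x^\star)$ the condition is literally~\eqref{eq:waequilibrium}, which is item~2. For $i\in\Vwm\cap\V_{\mathrm{int}}(x^\star)$, write $S^+=\V_{\geq x_i^\star}(x^\star)$ and $S^-=\V_{\leq x_i^\star}(x^\star)$. Then $\sum_{j:x_j^\star<x_i^\star}a_{ij}=1-a_i(S^+)$, so~\eqref{eq:wmequilibrium} is equivalent to $a_i(S^+)\geq\tfrac12$ and $a_i(S^-)\geq\tfrac12$. It remains to see that these two inequalities for $i$ alone are equivalent to partial cohesion of $S^+$ and $S^-$, which requires the inequality for \emph{every} WM member of those sets.

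This is the one step needing care, and I would handle it as follows. Assume~\eqref{eq:wmequilibrium} holds at all WM nodes, i.e., $x^\star$ is an equilibrium. Let $k\in S^+\cap\Vwm$. Since $x_k^\star\geq x_i^\star$, we have $S^+\supseteq\V_{\geq x_k^\star}(x^\star)$. By the local condition at $k$, $a_k(\V_{\geq x_k^\star})=1-\sum_{j:x_j^\star<x_k^\star}a_{kj}\geq\tfrac12$, hence $a_k(S^+)\geq\tfrac12$. The same argument works for $S^-$, so item~3 follows. Conversely, if item~3 holds, then $i$ itself belongs to both $S^+$ and $S^-$, so partial cohesion gives the two needed inequalities at $i$. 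Combined with the extreme-node analysis, every node satisfies its condition in Theorem~\ref{thm:nodeequilibrium}, and $x^\star$ is an equilibrium.
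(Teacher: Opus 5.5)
Your proposal is correct and takes essentially the same route as the paper. Like the paper, you reduce to the local conditions of Theorem~\ref{thm:nodeequilibrium}, derive $a_i(\V_{\max}(x^\star))=1$ for WA extreme nodes from the nonnegative-sum balance and $a_i(\V_{\max}(x^\star))\geq\tfrac12$ for WM extreme nodes, and obtain condition~3 from $\V_{\geq x_k^\star}(x^\star)\subseteq\V_{\geq x_i^\star}(x^\star)$ for each WM member $k$, with $i$'s own membership giving sufficiency. The one cosmetic difference is that the paper shows partial cohesion of every nonempty level set $\V_{\geq\theta}(x^\star)$ and $\V_{\leq\theta}(x^\star)$, not only those at the thresholds $x_i^\star$.
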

\begin{proof}
\emph{Necessity.} Suppose that $x^\star$ is an equilibrium, and put $M^\star=M(x^\star)$ and $m^\star=m(x^\star)$. For a WA individual $i\in\V_{\max}(x^\star)$, row normalization and~\eqref{eq:waequilibrium} give
\begin{equation}\label{eq:extremebalance}
 0=M^\star-\sum_j a_{ij}x_j^\star
   =\sum_{j\notin\V_{\max}(x^\star)}a_{ij}(M^\star-x_j^\star).
\end{equation}
Each summand is nonnegative, with strictly positive opinion difference. All external weights therefore vanish, giving $a_i(\V_{\max}(x^\star))=1$. For a WM individual at the maximum,~\eqref{eq:wmequilibrium} gives
\begin{equation}\label{eq:extrememedian}
 a_i(\V_{\max}(x^\star))
 =1-\sum_{j:x_j^\star<M^\star}a_{ij}\geq\tfrac12.
\end{equation}
The same argument applies at the minimum, using $\sum_j a_{ij}(x_j^\star-m^\star)=0$ for WA individuals and the total weight on opinions strictly above $m^\star$ for WM individuals. Thus both extreme sets are type-dependent cohesive. The WA balance condition at intermediate opinions is precisely condition~2.

Take any $\theta\in\R$ for which $\V_{\geq\theta}(x^\star)$ is nonempty, and any $j\in\Vwm\cap\V_{\geq\theta}(x^\star)$. By~\eqref{eq:wmequilibrium},
\begin{equation}\label{eq:upperlevelmass}
 \begin{aligned}
 a_j(\V_{\geq\theta}(x^\star))
 &\geq a_j(\V_{\geq x_j^\star}(x^\star))\\
 &=1-\sum_{h:x_h^\star<x_j^\star}a_{jh}\geq\tfrac12.
 \end{aligned}
\end{equation}
If this nonempty level set has no WM members, the condition in~\eqref{eq:partial} holds vacuously. Thus every nonempty upper level set is partially cohesive. Likewise, if $\V_{\leq\theta}(x^\star)\ne\varnothing$, each WM member $j$ satisfies
\[
 a_j(\V_{\leq\theta}(x^\star))
 \geq1-\sum_{h:x_h^\star>x_j^\star}a_{jh}\geq\tfrac12.
\]
The condition is again vacuous if there are no WM members. Each upper and lower level set specified in condition~3 contains $i$ and hence is nonempty. This proves necessity without requiring both types to occur in a level set.

\emph{Sufficiency.} Assume conditions~1--3. For a WA individual at an extreme opinion, every positively weighted opinion equals its own, so its response is unchanged. For a WM individual at the maximum, condition~1 bounds the weight on lower opinions by $1/2$, while the weight on higher opinions is zero; the minimum case is analogous. Hence extreme opinions satisfy Theorem~\ref{thm:nodeequilibrium}. Condition~2 gives the WA equilibrium equations at intermediate opinions. For each intermediate WM individual $i$, condition~3 yields
\begin{equation}\label{eq:levelcomplements}
 \begin{aligned}
 \sum_{j:x_j^\star<x_i^\star}a_{ij}
 &=1-a_i(\V_{\geq x_i^\star}(x^\star))\leq\tfrac12,\\
 \sum_{j:x_j^\star>x_i^\star}a_{ij}
 &=1-a_i(\V_{\leq x_i^\star}(x^\star))\leq\tfrac12.
 \end{aligned}
\end{equation}
The local equilibrium conditions therefore hold for every individual, completing the proof.
\end{proof}

\begin{remark}\label{rem:extremes}
The extreme sets themselves must be type-dependent cohesive; containment in larger type-dependent cohesive sets is insufficient. Conditions~2 and~3 also remain essential. Unlike the consensus criterion below, this equilibrium characterization depends on the opinion vector as well as the network and update-rule assignment.
\end{remark}

\subsection{Stability and convergence to equilibrium}
Having characterized equilibria, we now ask whether they are stable and whether every trajectory approaches one. We address both questions using nonexpansiveness. This property is studied for weighted-median maps in~\cite[Theorem~8]{han2025}. We verify it here for $H$ and $F$ under the selection rule~\eqref{eq:medianselection}.

\begin{lemma}[Nonexpansiveness]\label{lem:nonexpansive}
Under Assumptions~\ref{ass:network}--\ref{ass:relaxation}, the maps $H$ and $F$ in~\eqref{eq:response}--\eqref{eq:system} satisfy
\begin{equation}\label{eq:nonexpansive}
 \|G(x)-G(y)\|_\infty\leq\|x-y\|_\infty
 \quad(x,y\in\R^n),\qquad G\in\{H,F\}.
\end{equation}
\end{lemma}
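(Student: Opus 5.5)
The plan is to derive the estimate directly from Lemma~\ref{lem:order}, using the standard fact (Crandall--Tartar type) that monotone, additively homogeneous maps are nonexpansive in the sup-norm.

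Fix $x,y\in\R^n$ and set $\delta=\|x-y\|_\infty$. Then componentwise
\[
 y-\delta\one\leq x\leq y+\delta\one .
\]
For $G\in\{H,F\}$, monotonicity in~\eqref{eq:ordertranslation} gives
\[
 G(y-\delta\one)\leq G(x)\leq G(y+\delta\one),
\]
and additive homogeneity rewrites the outer terms as $G(y)-\delta\one$ and $G(y)+\delta\one$. Hence $-\delta\one\leq G(x)-G(y)\leq\delta\one$, which is exactly~\eqref{eq:nonexpansive}. The same argument applies verbatim to $H$ and to $F$, since Lemma~\ref{lem:order} establishes both properties for each map.

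All the substantive work is therefore already contained in Lemma~\ref{lem:order}. The delicate point there is the monotonicity of the selected median under the projection rule~\eqref{eq:medianselection}, in particular the monotonicity of the upper endpoint $u_i$, defined with a strict inequality. I would simply invoke that lemma.

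As a cross-check, one could instead argue directly for $F$: each $F_i$ is a convex combination of $x_i$ and $H_i(x)$, both of which are $1$-Lipschitz in $\|\cdot\|_\infty$. However, the order argument above is cleaner and covers both maps at once, so it is the one I would present.
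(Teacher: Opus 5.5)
Your proposal is correct and is exactly the paper's argument: you sandwich $x$ between $y\pm\delta\one$ and apply the monotonicity and additive homogeneity from Lemma~\ref{lem:order}. This yields $G(y)-\delta\one\leq G(x)\leq G(y)+\delta\one$ for both $G=H$ and $G=F$.
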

\begin{proof}
Set $\delta=\|x-y\|_\infty$. Then $y-\delta\one\leq x\leq y+\delta\one$. Applying~\eqref{eq:ordertranslation} gives
\[
 G(y)-\delta\one\leq G(x)\leq G(y)+\delta\one,
\]
which is~\eqref{eq:nonexpansive}.
\end{proof}

\begin{corollary}[Stability of every equilibrium]\label{thm:stability}
Consider system~\eqref{eq:system} under Assumptions~\ref{ass:network}--\ref{ass:relaxation}. For every equilibrium $x^\star$ and initial condition $x(0)\in\R^n$,
\begin{equation}\label{eq:stability}
 \|x(k)-x^\star\|_\infty
 \leq\|x(0)-x^\star\|_\infty\qquad(k\geq0).
\end{equation}
Hence every equilibrium is Lyapunov stable. This statement does not assert asymptotic stability of an individual equilibrium.
\end{corollary}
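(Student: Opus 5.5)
The plan is a direct induction that uses the nonexpansiveness of $F$ from Lemma~\ref{lem:nonexpansive} together with the fact that $x^\star$ is a fixed point. Since $x^\star$ is an equilibrium, $F(x^\star)=x^\star$. For any $k\geq0$, we apply~\eqref{eq:nonexpansive} with $G=F$, $x=x(k)$, and $y=x^\star$:
\[
 \|x(k+1)-x^\star\|_\infty
 =\|F(x(k))-F(x^\star)\|_\infty
 \leq\|x(k)-x^\star\|_\infty .
\]
Thus the sequence $\|x(k)-x^\star\|_\infty$ is nonincreasing in $k$. Induction from $k=0$ then gives~\eqref{eq:stability} for every $k\geq0$.

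Lyapunov stability follows immediately in the $\infty$-norm. Given $\varepsilon>0$, take $\delta=\varepsilon$: any $x(0)$ with $\|x(0)-x^\star\|_\infty<\delta$ yields $\|x(k)-x^\star\|_\infty<\varepsilon$ for all $k$. Because all norms on $\R^n$ are equivalent, the same conclusion holds in any norm, with $\delta$ rescaled by the equivalence constants.

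There is no genuine obstacle. The only point to check is that Lemma~\ref{lem:nonexpansive} applies to $F$ itself, not merely to $H$; the lemma states this explicitly for $G\in\{H,F\}$. I would also note briefly why the corollary stops short of asymptotic stability. Every equilibrium $x^\star$ belongs to the continuum $x^\star+c\one$ of equilibria, by the translation property~\eqref{eq:ordertranslation}. Hence arbitrarily small perturbations along $\one$ are themselves equilibria and do not return to $x^\star$, so no individual equilibrium can be asymptotically stable.
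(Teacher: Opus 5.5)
Your proof is correct and is essentially the paper's argument: apply Lemma~\ref{lem:nonexpansive} to $F$ with $y=x^\star$, iterate the resulting inequality, and take $\delta=\varepsilon$ for Lyapunov stability. Your extra remark is also right and slightly sharpens the paper's caveat. Since $x^\star+c\one$ is an equilibrium for every $c$ by the translation property, no individual equilibrium can be asymptotically stable; the paper only says the corollary does not assert asymptotic stability.
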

\begin{proof}
Since $F(x^\star)=x^\star$, Lemma~\ref{lem:nonexpansive} gives, for every $k\geq0$,
\[
 \begin{aligned}
 \|x(k+1)-x^\star\|_\infty
 &=\|F(x(k))-F(x^\star)\|_\infty\\
 &\leq\|x(k)-x^\star\|_\infty\\
 &\leq\cdots\leq\|x(0)-x^\star\|_\infty.
 \end{aligned}
\]
For any $\varepsilon>0$, choose $\delta=\varepsilon$. Then $\|x(0)-x^\star\|_\infty<\delta$ implies $\|x(k)-x^\star\|_\infty<\varepsilon$ for all $k\geq0$, which is Lyapunov stability. This bound alone does not imply convergence to the particular equilibrium $x^\star$.
\end{proof}

Nonexpansiveness alone does not imply convergence. We use the following standard finite-dimensional consequence of Krasnosel'skii--Mann theory.

\begin{lemma}[Finite-dimensional averaged iteration~\cite{bravo2019}]\label{lem:km}
Let $T:\R^n\to\R^n$ be nonexpansive in a norm $\|\cdot\|$, with $\Fix(T)\neq\varnothing$. For any constant $\alpha\in(0,1)$ and $z(0)\in\R^n$, the iteration
\begin{equation}\label{eq:kmiteration}
 z(k+1)=(1-\alpha)z(k)+\alpha T(z(k))
\end{equation}
converges in norm to a point in $\Fix(T)$.
\end{lemma}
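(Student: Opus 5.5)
The statement is the finite-dimensional Krasnosel'skii--Mann theorem for a nonexpansive map in an arbitrary norm with a nonempty fixed-point set. Since all norms on $\R^n$ are equivalent and the hypotheses are invariant under scaling, I would work directly in the given norm $\|\cdot\|$. The argument has three steps. First, Fejér monotonicity gives boundedness. Second, the asymptotic-regularity estimate of Ishikawa (or Goebel--Kirk) shows the residual vanishes. Third, a compactness argument combined with Fejér monotonicity upgrades subsequential convergence to full convergence.

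\emph{Step 1 (Fejér monotonicity).} Fix $p\in\Fix(T)$. Then
\[
\|z(k+1)-p\|\le(1-\alpha)\|z(k)-p\|+\alpha\|T(z(k))-T(p)\|\le\|z(k)-p\|,
\]
so $\|z(k)-p\|$ is nonincreasing for every fixed point $p$. In particular, $(z(k))$ is bounded.

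\emph{Step 2 (asymptotic regularity).} Next I would show that $\|z(k)-T(z(k))\|\to0$. This is the main obstacle, because the norm need not be strictly convex (the intended application uses $\|\cdot\|_\infty$), so the Hilbert-space identity is unavailable. I would invoke Ishikawa's theorem (1976), which holds in any normed space: for bounded KM iterates with a constant $\alpha\in(0,1)$, the residual tends to zero. Quantitatively, Baillon--Bruck give the explicit estimate
\[
\|z(k)-T(z(k))\|\le \frac{\operatorname{diam}}{\sqrt{\pi\alpha(1-\alpha)k}},
\]
and this is essentially the result cited from~\cite{bravo2019}. I would cite this estimate rather than reprove it, noting only that the residual is also nonincreasing. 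That monotonicity follows from
\[
z(k+1)-T(z(k+1))=(1-\alpha)\bigl(z(k)-T(z(k))\bigr)+\bigl(T(z(k))-T(z(k+1))\bigr)
\]
together with $\|z(k+1)-z(k)\|=\alpha\|z(k)-T(z(k))\|$.

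\emph{Step 3 (convergence).} By boundedness and finite dimension, some subsequence $z(k_j)\to\bar z$. Since $T$ is continuous (it is nonexpansive) and the residual tends to zero, we get $T(\bar z)=\bar z$, so $\bar z\in\Fix(T)$. Applying Step 1 with $p=\bar z$, the sequence $\|z(k)-\bar z\|$ is nonincreasing and has a subsequence tending to $0$. Hence the whole sequence tends to $0$, and $z(k)\to\bar z\in\Fix(T)$ in norm. Finite dimensionality is used only here, to extract the convergent subsequence in norm.
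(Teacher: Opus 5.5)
Your proposal is correct and follows essentially the same route as the paper: Fej\'er monotonicity for boundedness, a cited asymptotic-regularity result, and then compactness plus Fej\'er monotonicity at the cluster point. The paper invokes Theorem~1 of~\cite{bravo2019} after checking its boundedness hypothesis through the $T$-invariant ball $\{z:\|z-p\|\le\|z(0)-p\|\}$; that same ball is the set whose diameter your Baillon--Bruck bound needs, and your extra monotonicity of the residual is correct but not required.
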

\begin{proof}
Fix $p\in\Fix(T)$. The triangle inequality and nonexpansiveness imply
\[
 \begin{aligned}
 \|z(k+1)-p\|
 &\leq(1-\alpha)\|z(k)-p\|\\
 &\quad+\alpha\|T(z(k))-T(p)\|\\
 &\leq\|z(k)-p\|\leq\|z(0)-p\|.
 \end{aligned}
\]
Put $r=\|z(0)-p\|$ and $C=\{z:\|z-p\|\leq r\}$. This is a closed bounded convex set, and $T(C)\subseteq C$ because $\|T(z)-p\|\leq\|z-p\|$. Thus the averaged iterates also remain in $C$. In particular,
\[
 \|T(z(k))-z(0)\|
 \leq\|T(z(k))-p\|+\|p-z(0)\|\leq2r.
\]
This verifies hypothesis (h0) of~\cite[Theorem~1]{bravo2019}. Equivalently, its Proposition~2(b) applies on the whole space since a fixed point exists and the iteration errors are zero. Those errors satisfy both the vanishing-error and weighted-summability requirements. Moreover,
\[
 \sum_{j=1}^{k}\alpha(1-\alpha)=k\alpha(1-\alpha)\longrightarrow\infty.
\]
The cited theorem therefore gives $\|T(z(k))-z(k)\|\to0$.

Finite dimensionality and boundedness yield a subsequence $z(k_j)\to z^\star$. Nonexpansiveness implies continuity, and more explicitly,
\[
 \begin{aligned}
 \|T(z^\star)-z^\star\|
 &\leq\|T(z^\star)-T(z(k_j))\|\\
 &\quad+\|T(z(k_j))-z(k_j)\|+\|z(k_j)-z^\star\|\\
 &\leq2\|z(k_j)-z^\star\|+\|T(z(k_j))-z(k_j)\|
 \longrightarrow0.
 \end{aligned}
\]
Hence $z^\star\in\Fix(T)$. The first inequality chain with $p=z^\star$ shows that $\|z(k)-z^\star\|$ is nonincreasing. For $k\geq k_j$ it is at most $\|z(k_j)-z^\star\|$, which tends to zero. Thus the whole sequence converges to $z^\star$. The argument uses only norm inequalities and finite-dimensional compactness.
\end{proof}

\begin{theorem}[Global convergence]\label{thm:convergence}
Consider system~\eqref{eq:system} under Assumptions~\ref{ass:network}--\ref{ass:relaxation}. For every $x(0)\in\R^n$, there exists an equilibrium $x^\star$ such that $x(k)\to x^\star$. Each component of $x^\star$ lies in the initial opinion interval. The limit satisfies the local conditions of Theorem~\ref{thm:nodeequilibrium} and, when nonconstant, the level-set conditions of Theorem~\ref{thm:levelequilibrium}.
\end{theorem}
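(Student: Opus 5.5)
The plan is to recast the heterogeneous update~\eqref{eq:system} as a Krasnosel'skii--Mann iteration with a single common coefficient, and then invoke Lemma~\ref{lem:km}. The introduction already points to an auxiliary map. Set $\alpha=\max_i\gamma_i\in(0,1)$ and define $T:\R^n\to\R^n$ componentwise by
\[
 T_i(x)=\Bigl(1-\tfrac{\gamma_i}{\alpha}\Bigr)x_i+\tfrac{\gamma_i}{\alpha}H_i(x),
\]
so that $\gamma_i/\alpha\in(0,1]$. A direct substitution gives $(1-\alpha)x_i+\alpha T_i(x)=(1-\gamma_i)x_i+\gamma_iH_i(x)=F_i(x)$. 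Hence $F=(1-\alpha)\mathrm{Id}+\alpha T$, and the trajectory $x(k)$ is exactly the iteration~\eqref{eq:kmiteration} with $z(k)=x(k)$.

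Next I would verify the hypotheses of Lemma~\ref{lem:km} for $T$. For monotonicity and additive homogeneity, each $T_i$ is a convex combination, with coefficients in $[0,1]$, of the coordinate $x_i$ and $H_i(x)$. Both of these are monotone and commute with adding $c\one$ by Lemma~\ref{lem:order}, so $T$ is monotone and additively homogeneous as well. The argument of Lemma~\ref{lem:nonexpansive} then applies verbatim and shows that $T$ is nonexpansive in $\|\cdot\|_\infty$. For the fixed-point set, $\gamma_i/\alpha>0$ gives $T_i(x)=x_i$ if and only if $H_i(x)=x_i$, which holds if and only if $F_i(x)=x_i$. Thus $\Fix(T)=\Fix(F)$, and this set is nonempty because it contains every $c\one$ (Theorem~\ref{thm:nodeequilibrium}).

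Lemma~\ref{lem:km} then yields $x(k)\to x^\star\in\Fix(T)=\Fix(F)$, so the limit is an equilibrium. By Lemma~\ref{lem:order}, every iterate lies in the closed box $[m(x(0)),M(x(0))]^n$, so the limit does too. The final sentence of the theorem follows directly from Theorem~\ref{thm:nodeequilibrium}, and from Theorem~\ref{thm:levelequilibrium} when $x^\star$ is nonconstant.

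The only step needing care is the construction of $T$: it is what handles unequal $\gamma_i$, since $F$ itself is not of the form $(1-\alpha)\mathrm{Id}+\alpha H$. The main check is that $T$ remains nonexpansive in $\|\cdot\|_\infty$ even though its coefficients vary across coordinates. This works because the $\infty$-norm estimate comes purely from the order properties, which hold coordinatewise. Nonexpansiveness of $H$ alone in the $\infty$-norm would not be enough if one tried to argue through a Euclidean or weighted-sum bound.
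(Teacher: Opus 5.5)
Your proposal is correct and matches the paper's proof. It uses the same auxiliary map $T$ with $\alpha=\max_i\gamma_i$, the identity $F=(1-\alpha)I+\alpha T$, the equality $\Fix(T)=\Fix(F)\ni c\one$, Lemma~\ref{lem:km}, and Lemma~\ref{lem:order} for the interval bound. The only difference is how you get nonexpansiveness of $T$. You derive it from the monotonicity and additive homogeneity of $T$, whereas the paper applies the triangle inequality coordinatewise, $|T_i(x)-T_i(y)|\le(1-\eta_i)|x_i-y_i|+\eta_i|H_i(x)-H_i(y)|\le\|x-y\|_\infty$ with $\eta_i=\gamma_i/\alpha$. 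This shows that nonexpansiveness of $H$ in $\|\cdot\|_\infty$ is already enough, so your closing caveat is not needed.
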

\begin{proof}
To account for unequal adjustment coefficients, let $\Gamma=\operatorname{diag}(\gamma_1,\ldots,\gamma_n)$, $\alpha=\max_i\gamma_i\in(0,1)$, and $\eta_i=\gamma_i/\alpha\in(0,1]$. Define the auxiliary map
\begin{equation}\label{eq:auxiliarymap}
 T_i(x)=(1-\eta_i)x_i+\eta_i H_i(x),\qquad i\in\V.
\end{equation}
For $\delta=\|x-y\|_\infty$, Lemma~\ref{lem:nonexpansive} yields
\begin{equation}\label{eq:auxiliarynonexpansive}
 \begin{aligned}
 |T_i(x)-T_i(y)|
 &\leq(1-\eta_i)|x_i-y_i|\\
 &\quad+\eta_i|H_i(x)-H_i(y)|
 \leq\delta.
 \end{aligned}
\end{equation}
Thus $T$ is nonexpansive in the maximum norm. The update rules give
\[
 \begin{aligned}
 F(x)-x&=\Gamma(H(x)-x),\\
 T(x)-x&=\alpha^{-1}\Gamma(H(x)-x).
 \end{aligned}
\]
All diagonal entries of $\Gamma$ are positive, so $\Gamma$ is invertible; also $\alpha>0$. Each of $F(x)=x$ and $T(x)=x$ is therefore equivalent to $H(x)=x$. Since $H$ fixes constant vectors by Lemma~\ref{lem:order},
\begin{equation}\label{eq:fixedpointequivalence}
 \Fix(T)=\Fix(H)=\Fix(F)\supseteq\{c\one:c\in\R\}.
\end{equation}
Writing $I$ for the identity map $I(x)=x$, direct substitution of $\alpha\eta_i=\gamma_i$ gives $F(x)=(1-\alpha)x+\alpha T(x)$, or equivalently
\begin{equation}\label{eq:averagedform}
 F=(1-\alpha)I+\alpha T.
\end{equation}
The map $T$ is nonexpansive on $\R^n$, its fixed-point set is nonempty by~\eqref{eq:fixedpointequivalence}, and $\alpha\in(0,1)$ is constant. These verify all hypotheses of Lemma~\ref{lem:km} for the original trajectory with $z(0)=x(0)$. The resulting limit belongs to $\Fix(T)=\Fix(F)$ and hence is an equilibrium of~\eqref{eq:system}. Lemma~\ref{lem:order} keeps each component in the initial opinion interval. The equilibrium characterizations then give the remaining claims.
\end{proof}

If coefficients equal to one are allowed, the remaining assumptions no longer guarantee convergence for every system: for two WA individuals with $a_{12}=a_{21}=1$ and $\gamma_1=\gamma_2=1$, distinct opinions are exchanged at each update. This periodic example does not imply that every network with such a coefficient fails to converge. The strict upper bound is a sufficient assumption for the general guarantee, not a necessary condition for each particular network.

For numerical diagnostics, let $R(x)=\|H(x)-x\|_\infty$, $\gamma_{\min}=\min_i\gamma_i$, and $\gamma_{\max}=\max_i\gamma_i$. Since $F_i(x)-x_i=\gamma_i(H_i(x)-x_i)$, bounding each positive multiplier by $\gamma_{\min}$ and $\gamma_{\max}$ and taking the maximum gives
\begin{equation}\label{eq:residualcomparison}
 \gamma_{\min}R(x)\leq\|F(x)-x\|_\infty
 \leq\gamma_{\max}R(x).
\end{equation}
By Theorem~\ref{thm:convergence}, the sequence $x(k)$ converges to an equilibrium $x^\star$. Hence
\[
 \begin{aligned}
 \|F(x(k))-x(k)\|_\infty
 &=\|x(k+1)-x(k)\|_\infty\\
 &\leq\|x(k+1)-x^\star\|_\infty+\|x(k)-x^\star\|_\infty
 \longrightarrow0.
 \end{aligned}
\]
Combining this with~\eqref{eq:residualcomparison} and $\gamma_{\min}>0$ gives $R(x(k))\to0$. The quantity $R(x)$ is the residual of the equilibrium equation $H(x)=x$, not the opinion diameter $D(x)$ or a direct measure of distance to the limiting equilibrium. A small residual does not imply consensus: it also tends to zero along trajectories converging to nonconstant equilibria. Unlike the update magnitude, this residual is not scaled by the adjustment coefficients.

\subsection{Conditions for consensus and persistent dissensus}
Global convergence reduces the consensus question to the existence of nonconstant equilibria. The following lemma relates type-dependent cohesion to invariance of a group's opinion interval.

\begin{lemma}[Invariance of group opinion intervals]\label{lem:groupinterval}
Consider system~\eqref{eq:system} under Assumptions~\ref{ass:network}--\ref{ass:relaxation}, and let $\varnothing\ne S\subseteq\V$. With $\mathscr C(A,\Vwa)$ defined in Definition~\ref{def:cohesion}, $S\in\mathscr C(A,\Vwa)$ if and only if, for every $x\in\R^n$ and $i\in S$,
\begin{equation}\label{eq:grouponeupdate}
 \min_{j\in S}x_j\leq F_i(x)\leq\max_{j\in S}x_j.
\end{equation}
Consequently, for such a set and every trajectory,
\begin{equation}\label{eq:groupinterval}
 \min_{j\in S}x_j(k)\leq x_i(k+1)\leq\max_{j\in S}x_j(k).
\end{equation}
In particular, every member of $S$ remains in the initial interval spanned by the members of $S$.
\end{lemma}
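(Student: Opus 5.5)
The proof splits into the two directions of the equivalence, followed by a short induction for the trajectory consequence. Throughout, write $m_S(x)=\min_{j\in S}x_j$ and $M_S(x)=\max_{j\in S}x_j$. Since $F_i(x)$ is a convex combination of $x_i$ and $H_i(x)$, and $x_i\in[m_S(x),M_S(x)]$ for $i\in S$, the bound~\eqref{eq:grouponeupdate} is equivalent to the same bound with $H_i$ in place of $F_i$. Here we use $\gamma_i\in(0,1)$. Both directions therefore reduce to statements about $H_i$.

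\emph{Sufficiency.} Assume $S\in\mathscr C(A,\Vwa)$ and fix $i\in S$.
\begin{itemize}
\item If $i\in\Vwa$, then $a_i(S)=1$, so $H_i(x)=\sum_{j\in S}a_{ij}x_j$. This is a convex combination of the opinions in $S$ and lies in $[m_S,M_S]$.
\item If $i\in\Vwm$, then $a_i(S)\geq\tfrac12$. Take $z=M_S(x)$. Every $j$ with $x_j>z$ lies outside $S$, so $\sum_{j:x_j>z}a_{ij}\leq 1-a_i(S)\leq\tfrac12$. By the cumulative-weight description in the proof of Lemma~\ref{lem:order}, this forces $\ell_i(x;A)\leq M_S$. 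Symmetrically, $u_i(x;A)\geq m_S$.
\end{itemize}
I expect the WM case to be the main obstacle. The weighted-median interval itself may extend beyond $[m_S,M_S]$, because the weight outside $S$ can pull one endpoint away. What rescues the argument is the projection rule~\eqref{eq:medianselection}. Since $x_i\in[m_S,M_S]$, $\ell_i\leq M_S$, and $u_i\geq m_S$, the value $\max\{\ell_i,\min\{x_i,u_i\}\}$ stays in $[m_S,M_S]$. Checking the cases:
\begin{itemize}
\item If $u_i\geq x_i$, the inner minimum is $x_i$, and the maximum with $\ell_i\leq M_S$ stays at most $M_S$ and at least $x_i\geq m_S$.
\item If $u_i<x_i$, the inner minimum is $u_i\in[m_S,x_i)$. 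The outer maximum with $\ell_i\leq u_i$ returns $u_i$.
\end{itemize}

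\emph{Necessity.} Suppose $S\notin\mathscr C(A,\Vwa)$. I will construct an $x$ that violates~\eqref{eq:grouponeupdate}, using indicator-type vectors. Set $x_j=0$ for $j\in S$ and $x_j=1$ otherwise, so $m_S=M_S=0$.
\begin{itemize}
\item If some $i\in S\cap\Vwa$ has $a_i(S)<1$, then $H_i(x)=1-a_i(S)>0$. Hence $F_i(x)=\gamma_i H_i(x)>0=M_S$, a contradiction.
\item If some $i\in S\cap\Vwm$ has $a_i(S)<\tfrac12$, then the weight on $\{x_j>0\}$ exceeds $\tfrac12$. By~\eqref{eq:mediancondition}, $0\notin\mathcal M_i$ and every median equals $1$, so $\ell_i=1$. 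Then $\Med_i=1$, and $F_i(x)=\gamma_i>0$, again a contradiction.
\end{itemize}

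\emph{Consequence.} Apply~\eqref{eq:grouponeupdate} with $x=x(k)$ to obtain~\eqref{eq:groupinterval}. Taking the minimum and maximum over $i\in S$ shows that $m_S(x(k))$ is nondecreasing and $M_S(x(k))$ is nonincreasing. Induction on $k$ then keeps every member of $S$ in $[m_S(x(0)),M_S(x(0))]$.
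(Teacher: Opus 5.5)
Your proof is correct and follows essentially the same route as the paper. Sufficiency uses the endpoint bounds $\ell_i\le\max_{j\in S}x_j$ and $u_i\ge\min_{j\in S}x_j$ together with the projection rule, and necessity uses the binary test state ($0$ on $S$, $1$ off $S$). One small caveat is that for a fixed $x$, the $H_i$-bound implies the $F_i$-bound but not conversely, so your opening ``equivalence'' is overstated. This does no harm, because your necessity step computes $F_i(x)=\gamma_iH_i(x)$ directly at a state where $x_i=\max_{j\in S}x_j$.
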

\begin{proof}
\emph{Sufficiency.} Write $a=\min_{j\in S}x_j$ and $b=\max_{j\in S}x_j$. A WA member assigns all its weight to $S$, giving $H_i(x)\in[a,b]$. A WM member assigns at least half its weight to $[a,b]$, so its median interval intersects $[a,b]$: the lower endpoint is at most $b$ and the upper endpoint at least $a$. Projecting $x_i\in[a,b]$ onto that interval remains in $[a,b]$, including at equality $a_i(S)=1/2$. Convex combination with $x_i$ proves~\eqref{eq:grouponeupdate}. Repeating at each step gives~\eqref{eq:groupinterval} and the claimed invariance.

\emph{Necessity.} We prove the contrapositive: if $S\notin\mathscr C(A,\Vwa)$, there exist $x\in\R^n$ and $i\in S$ for which~\eqref{eq:grouponeupdate} fails. By Definition~\ref{def:cohesion} and $a_i(S)\leq1$, such a set has a WA member with $a_i(S)<1$ or a WM member with $a_i(S)<1/2$. Consider the state
\begin{equation}\label{eq:binaryteststate}
 x_j=0\quad(j\in S),\qquad x_j=1\quad(j\notin S).
\end{equation}
If a WA member $i$ violates~\eqref{eq:special}, then
\[
 F_i(x)=\gamma_i\sum_{j\notin S}a_{ij}
       =\gamma_i(1-a_i(S))>0.
\]
If a WM member violates~\eqref{eq:partial}, the opinion $1$ carries strictly more than half its weight. Its unique median is $1$, hence $F_i(x)=\gamma_i>0$. For this state, the original group interval is $[\min_{j\in S}x_j,\max_{j\in S}x_j]=[0,0]$. The violating member has $F_i(x)>0$ in either case. This negates the assertion that~\eqref{eq:grouponeupdate} holds for every state and every member, establishing necessity.
\end{proof}

For sets containing only WM individuals, sufficiency gives the invariance property in~\cite[Proposition~10]{han2025}. WA members must additionally assign all their weight to the set. Selecting the point in the median interval closest to the current opinion is essential at ties; an arbitrary endpoint need not preserve the group's interval.

\begin{theorem}[Consensus criterion]\label{thm:consensus}
Consider system~\eqref{eq:system} under Assumptions~\ref{ass:network}--\ref{ass:relaxation}, with the type-dependent cohesive-set family $\mathscr C(A,\Vwa)$ defined in Definition~\ref{def:cohesion}. The following statements are equivalent:
\begin{enumerate}
\item Every $x(0)\in\R^n$ generates a trajectory converging to a consensus vector.
\item Every equilibrium of~\eqref{eq:system} is a consensus vector.
\item There are no nonempty disjoint sets $S_-,S_+\in\mathscr C(A,\Vwa)$.
\end{enumerate}
If condition~3 fails, there is a nonempty open, bounded set of initial conditions, of positive Lebesgue measure in $\R^n$, for which the limiting equilibrium is nonconstant.
\end{theorem}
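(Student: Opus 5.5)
The plan is to prove the cycle $1\Rightarrow2\Rightarrow3\Rightarrow1$ and then derive the open-set statement from the construction used for $\neg3\Rightarrow\neg1$.

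\emph{Step $1\Rightarrow2$.} If $x^\star$ is an equilibrium, the trajectory started at $x(0)=x^\star$ is constant. By statement~1 it converges to a consensus vector, so $x^\star$ itself is a consensus vector.

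\emph{Step $2\Rightarrow3$ (contrapositive).} Suppose $S_-,S_+\in\mathscr C(A,\Vwa)$ are nonempty and disjoint. Take the initial condition $x_j(0)=0$ for $j\in S_-$, $x_j(0)=1$ for $j\in S_+$, and arbitrary values in $[0,1]$ elsewhere. Lemma~\ref{lem:groupinterval} keeps $x_j(k)=0$ on $S_-$ and $x_j(k)=1$ on $S_+$ for all $k$. By Theorem~\ref{thm:convergence} the limit is an equilibrium, and it is nonconstant. So statement~2 fails. This same argument already gives $1\Rightarrow3$ directly.

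\emph{Step $3\Rightarrow2$, then $2\Rightarrow1$.} Let $x^\star$ be a nonconstant equilibrium. By condition~1 of Theorem~\ref{thm:levelequilibrium}, $\V_{\max}(x^\star)$ and $\V_{\min}(x^\star)$ both belong to $\mathscr C(A,\Vwa)$. They are disjoint because $M(x^\star)>m(x^\star)$, so condition~3 fails. Hence condition~3 implies statement~2. For $2\Rightarrow1$: by Theorem~\ref{thm:convergence}, every trajectory converges to an equilibrium, and under statement~2 that equilibrium is a consensus vector. This closes the cycle.

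\emph{Open set of dissensus initial conditions.} Assume condition~3 fails, with disjoint $S_\pm$ as above. Consider the open, bounded set
\[
U=\Bigl\{x\in\R^n:\ \max_{j\in S_-}x_j<\tfrac13,\ \min_{j\in S_+}x_j>\tfrac23,\ \|x\|_\infty<1\Bigr\},
\]
which is nonempty and has positive measure. By Lemma~\ref{lem:groupinterval}, the members of $S_-$ stay in $[\min_{S_-}x(0),\max_{S_-}x(0)]\subset(-\infty,\tfrac13)$, and the members of $S_+$ stay in $(\tfrac23,\infty)$. The limit therefore has some component $\le\tfrac13$ and another $\ge\tfrac23$, so it is nonconstant.

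The main obstacle is only the care required in using Lemma~\ref{lem:groupinterval}. It gives invariance of each group's own interval regardless of the outside opinions. The open-set claim relies on this intervalwise invariance, not merely on invariance of constant-opinion groups, and I expect that to go through directly.
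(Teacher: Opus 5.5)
Your proof is correct and follows essentially the same route as the paper. You get $1\Leftrightarrow2$ from starting at an equilibrium and from Theorem~\ref{thm:convergence}, $3\Rightarrow2$ from the cohesive extreme level sets of Theorem~\ref{thm:levelequilibrium}, and the failure of condition~3 from the interval invariance of Lemma~\ref{lem:groupinterval} on an open set of initial conditions. The only cosmetic difference is that the paper uses one open box with $a<b<c<d$ to refute statements~1 and~2 at once, whereas you first use a degenerate $0/1$ initial condition and then treat the open set separately.
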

\begin{proof}
Statement~1 implies statement~2 by initializing the trajectory at an equilibrium. Conversely, statement~2 and Theorem~\ref{thm:convergence} imply statement~1.

If statement~3 holds, a nonconstant equilibrium would, by Theorem~\ref{thm:levelequilibrium}, have two nonempty disjoint type-dependent cohesive extreme sets. This contradiction proves statement~2.

Suppose that statement~3 fails, and choose disjoint $S_-,S_+\in\mathscr C(A,\Vwa)$. Fix real numbers $a<b<c<d$ and consider the open box of initial conditions satisfying
\begin{equation}\label{eq:disagreementbox}
 \begin{cases}
 x_i(0)\in(a,b),&i\in S_-,\\
 x_i(0)\in(c,d),&i\in S_+,\\
 x_i(0)\in(a,d),&i\in\V\setminus(S_-\cup S_+).
 \end{cases}
\end{equation}
By Lemma~\ref{lem:groupinterval}, for every $k\geq0$,
\begin{equation}\label{eq:persistentgap}
 \begin{aligned}
 D(x(k))
 &\geq\min_{i\in S_+}x_i(k)-\max_{i\in S_-}x_i(k)\\
 &\geq\min_{i\in S_+}x_i(0)-\max_{i\in S_-}x_i(0)>c-b.
 \end{aligned}
\end{equation}
Theorem~\ref{thm:convergence} gives a limiting equilibrium with positive diameter. The box~\eqref{eq:disagreementbox} is nonempty, open, and bounded, hence has positive Lebesgue measure. Statement~1 therefore fails for these initial conditions, and the resulting nonconstant equilibria disprove statement~2, establishing both claims.
\end{proof}

The equivalence of statements~2 and~3 is a special case of~\cite[Theorem~3.1]{akian2020}, since $F$ is monotone and additively homogeneous, with $F(0)=0$, by Lemma~\ref{lem:order}. To relate that general result to our network criterion, we identify its two families of sets, called dominions, for the present update map. Starting from zero, perturb the opinions outside $S$ by a small positive or negative constant; the corresponding condition requires the updated opinions in $S$ to remain zero. Testing~\eqref{eq:binaryteststate} with either sign gives $a_i(S)=1$ for WA and $a_i(S)\geq1/2$ for WM. Both families therefore coincide with $\mathscr C(A,\Vwa)$. Our proof also identifies initial conditions with a persistent opinion gap.

For all WM, the cohesion threshold agrees with~\cite{mei2024}, although synchronous and asynchronous updates need not produce the same trajectories. For all WA, the criterion guarantees consensus on connected networks with bidirectional edges.

\begin{remark}[Replacing median updates by averaging]\label{rem:conversion}
Fix $A$ and let $\Vwa\subseteq\widehat{\V}^{\wa}$. Converting additional WM individuals to WA replaces their internal-weight requirement $a_i(S)\geq1/2$ by $a_i(S)=1$. Consequently,
\begin{equation}\label{eq:familyinclusion}
 \mathscr C(A,\widehat{\V}^{\wa})\subseteq\mathscr C(A,\Vwa).
\end{equation}
Removing type-dependent cohesive sets cannot create a disjoint pair. Hence the new assignment guarantees consensus from every initial condition whenever the original one does. However, this implication does not guarantee a smaller opinion diameter at a given time or a shorter time to reach a prescribed tolerance.
\end{remark}

Although the equilibrium \emph{set} is independent of $\gamma$, the limit from a given initial condition need not be, as Supplementary Section~S4.2 illustrates and derives for the all-WA model. The consensus criterion, in contrast, depends on network weights and the assignment of update rules. The following simulations examine how the number and locations of WA individuals affect the trajectories.

\section{Numerical experiments}\label{sec:numerics}
The four studies separate the effects of where averaging is used, how influence is weighted, and how many individuals use averaging. Study~1 varies WA locations on a fixed network. Study~2 varies influence weights and derives the corresponding consensus conditions. Study~3 examines how the frequency of reaching a consensus tolerance changes with WA fraction in random networks. Study~4 compares the criterion for consensus from every initial condition with simulated outcomes. Both use the same weights and update-rule assignments on smaller networks. The all-WA and all-WM models serve as baselines. Supplementary Sections~S1--S3 provide the counts for the controlled-weight, random-network, and structural comparisons, together with the protocols, paired comparisons, and additional checks.

\subsection{Simulation protocol and measures of convergence and consensus}
For specified inputs $(A,\Vwa,\gamma,x(0))$ with $D(x(0))>0$, define the normalized opinion diameter and the first hitting time of tolerance $\varepsilon$ by
\begin{align}
 E(k)&=D(x(k))/D(x(0)),\label{eq:normalizeddiameter}\\
 \tau_\varepsilon&=\inf\{k\geq0:E(k)\leq\varepsilon\},\label{eq:attainment}
\end{align}
where $\inf\varnothing=\infty$ and $\tau_\varepsilon=0$ for initially constant vectors. Time counts synchronous updates. The normalized residual $R(x(k))/D(x(0))$ measures violation of the equilibrium equation, whereas $E(k)$ measures disagreement. A small observed value of either quantity over a finite trajectory provides numerical evidence only: it does not replace a proof of exact equilibrium, asymptotic consensus, or consensus for all initial conditions.

To examine the effect of the proportion of averaging individuals, we use $\rho=|\Vwa|/n$ as the independent variable in the WA-fraction studies. Network ensemble, size, initial-condition distribution, and adjustment coefficients are fixed within each comparison and suppressed in the notation. For $N$ independent realizations, let $\mathbf1\{\cdot\}$ denote the indicator function. The empirical frequency of reaching the consensus tolerance is
\begin{equation}\label{eq:frequency}
 \widehat p_{\varepsilon,T}(\rho)
 =N^{-1}\sum_{r=1}^N\mathbf1\{\tau^{(r)}_\varepsilon(\rho)\leq T\}.
\end{equation}
To compare hitting times without discarding runs that do not reach the tolerance by $T$, we also report the restricted mean hitting time
\begin{equation}\label{eq:restrictedmean}
 \overline\tau_{\varepsilon,T}(\rho)
 =N^{-1}\sum_{r=1}^N\min\{\tau^{(r)}_\varepsilon(\rho),T\}.
\end{equation}
A run not reaching the tolerance contributes $T$; this is not an estimate of time to exact consensus. Study~2 instead indexes these statistics by $(q,\Vwa)$. Parameters specified in axes, legends, or captions are omitted from figure labels.

We use $(\varepsilon,T)=(10^{-4},10^5)$, except $T=10^4$ in Study~2. First hitting times of $10^{-3},10^{-4},10^{-5}$ are recorded until the strictest tolerance is reached, the floating-point state stops changing, or the horizon is reached. An unchanged floating-point state does not certify an exact equilibrium and, in particular, does not establish an exact dissensus equilibrium. Pointwise 95\% intervals use the Wilson method~\cite{wilson1927} for independent binary outcomes and bootstrap resampling of whole networks for dependent observations. For the numerical implementation, let $B=(b_{ij})$ denote the unnormalized integer weight matrix, with $a_{ij}=b_{ij}/\sum_h b_{ih}$. To identify median ties, the implementation compares twice the cumulative unnormalized integer weight with the total row weight. In particular, $a_i(S)\geq1/2$ is tested as $2\sum_{j\in S}b_{ij}\geq\sum_jb_{ij}$, with equality detected exactly. The integers are $b_{ij}$, not generally the normalized weights $a_{ij}$. Study~1 uses its integer matrix $B$, Study~2 represents its rational $q$ values by integer rows of total weight 600, and the random-network studies use sampled integer edge weights. These comparisons do not remove floating-point errors from opinion updates. Supplementary Section~S1 gives the seeds, stream construction, sampling rules, and stopping protocols.

\subsection{Study 1: WA locations and consensus}\label{sec:sixnode}
\begin{figure*}[t]
 \centering
 \includegraphics[width=\textwidth]{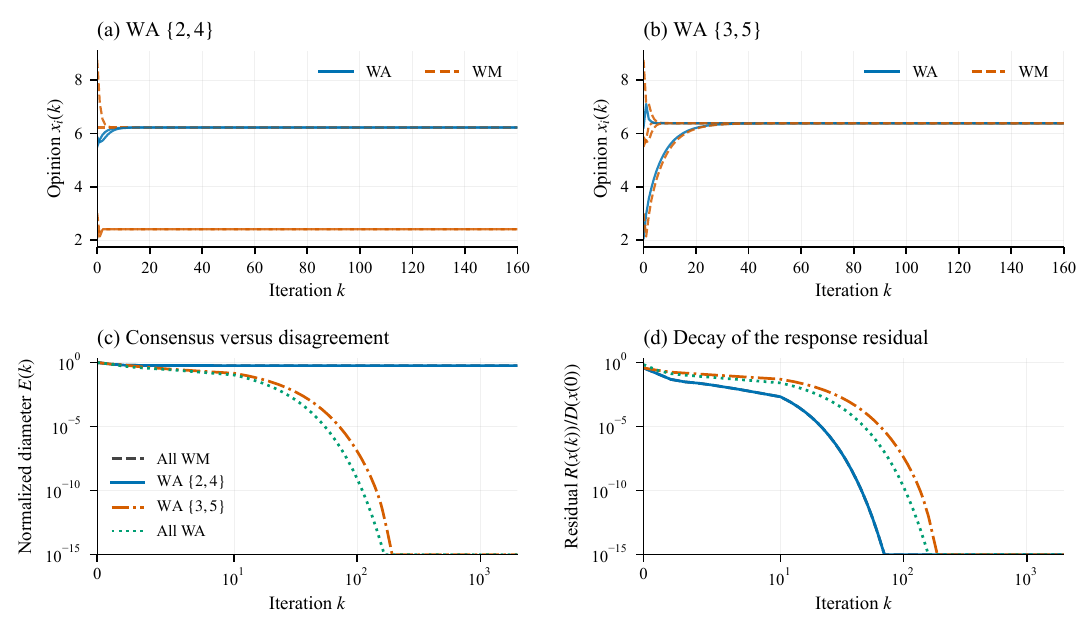}
 \caption{Study~1. Trajectories on the six-node network in Fig.~\ref{fig:model}(a), with $x(0)=(8.7594,\allowbreak5.5016,\allowbreak6.2248,\allowbreak5.8704,\allowbreak2.0774,\allowbreak3.0125)^\top$ and $\gamma=(0.6443,\allowbreak0.3786,\allowbreak0.8116,\allowbreak0.5328,\allowbreak0.3507,\allowbreak0.9390)^\top$ for all assignments. (a,b) WA node sets $\{2,4\}$ and $\{3,5\}$; blue solid and orange dashed curves represent WA and WM opinions. (c,d) Opinion diameter and response residual normalized by the initial diameter, including all-WM and all-WA baselines. Trajectories contain 2000 updates; (a,b) show the first 160. In (c,d), time is linear through step 10 and logarithmic thereafter; values below $10^{-15}$ are plotted at that value.}\label{fig:mechanism}
\end{figure*}
We first examine whether the locations of WA individuals affect consensus when their number is fixed. Opinion trajectories and equilibrium residuals distinguish convergence to equilibrium from convergence to consensus. Consider the network in Fig.~\ref{fig:model}, with $A=B/2$ and
\begin{equation}\label{eq:sixnode}
 B=\begin{pmatrix}
 0&0&1&0&1&0\\0&0&2&0&0&0\\1&0&0&1&0&0\\
 0&2&0&0&0&0\\0&0&1&0&0&1\\0&0&0&0&2&0
 \end{pmatrix}.
\end{equation}
Figure~\ref{fig:mechanism} compares all-WM, WA-$\{2,4\}$, WA-$\{3,5\}$, and all-WA using the same inputs over 2000 updates. Here WA-$S$ means that precisely the individuals in $S$ use weighted averaging.

Under WA-$\{2,4\}$, $\{2,3,4\}$ and $\{5,6\}$ are disjoint type-dependent cohesive sets. Under WA-$\{3,5\}$, the only such sets are $\{1,2,3,4\}$ and $\V$. Theorem~\ref{thm:consensus} therefore distinguishes the assignments despite equal WA fractions. At step 2000, the all-WM and WA-$\{2,4\}$ diameters are approximately $3.81946$, while the other two diameters and all four residuals are at roundoff level. The trajectories illustrate why convergence to equilibrium need not imply consensus.

\subsection{Study 2: influence weights and consensus conditions}
\begin{figure*}[t]
 \centering
 \includegraphics[width=\textwidth]{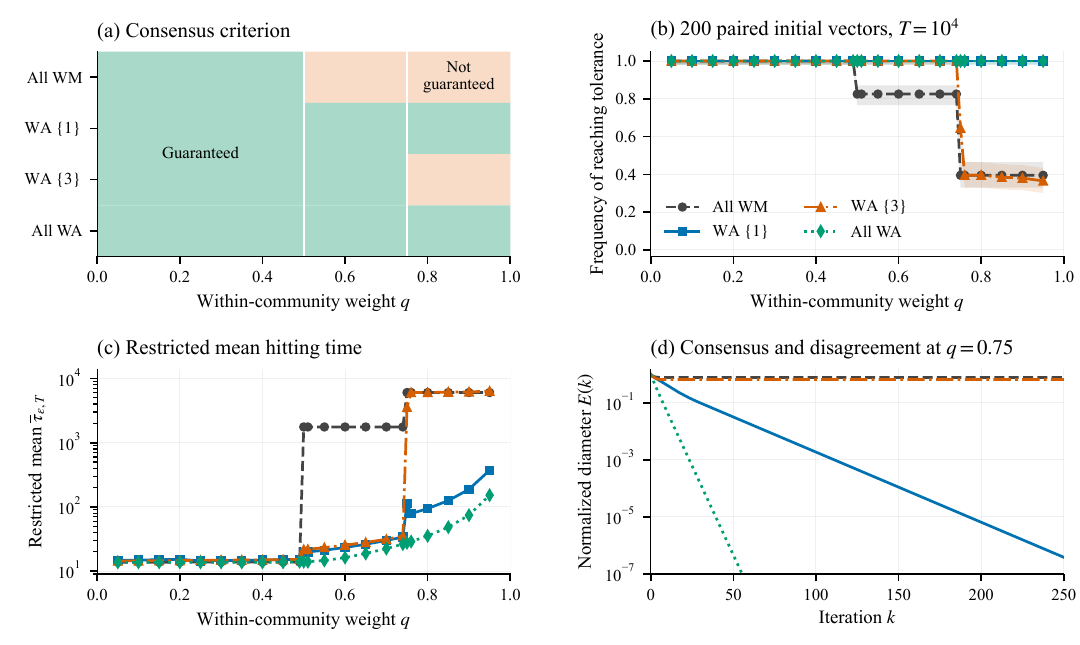}
 \caption{Study~2. Five-node network with communities of sizes 2 and 3, $0<q<1$. Each individual assigns total weight $q$ uniformly within its community and $1-q$ uniformly outside; $\gamma_i=0.5$. The four assignments use the same initial conditions. (a) Parameter regions guaranteeing consensus from every initial condition; the boundary values $q=1/2$ for all-WM and $3/4$ for WA-$\{3\}$ are excluded. (b,c) Frequency of reaching normalized diameter $10^{-4}$ and restricted mean hitting time, using 200 initial vectors sampled uniformly from $[0,1]^5$ and $T=10^4$. Shading gives pointwise 95\% Wilson intervals; runs not reaching the tolerance contribute $T$ in (c). (d) Consensus and disagreement at $q=0.75$, starting from $(0.05,0.15,0.8,0.7,0.9)^\top$. All-WM and WA-$\{3\}$ retain disagreement, while WA-$\{1\}$ and all-WA converge to consensus. The disjoint type-dependent cohesive sets are $L=\{1,2\}$ and $U=\{3,4,5\}$ for all-WM, and $L$ and $\{4,5\}$ for WA-$\{3\}$. Their separated initial intervals remain separated by Lemma~\ref{lem:groupinterval}; consensus in the other two assignments follows from Theorem~\ref{thm:consensus}. Colors and line styles follow (b).}\label{fig:controlled}
\end{figure*}
Having compared WA locations at fixed weights, we now vary the weights to determine when each assignment guarantees consensus under Theorem~\ref{thm:consensus}.

\paragraph{Network and consensus conditions} Partition five individuals into communities $L=\{1,2\}$ and $U=\{3,4,5\}$. For $0<q<1$, each individual assigns total weight $q$ uniformly within its community, including itself, and total weight $1-q$ uniformly to the other community:
\begin{equation}\label{eq:controlledweights}
 A(q)=\begin{pmatrix}
 \frac q2\one_2\one_2^\top&\frac{1-q}{3}\one_2\one_3^\top\\
 \frac{1-q}{2}\one_3\one_2^\top&\frac q3\one_3\one_3^\top
 \end{pmatrix},
\end{equation}
where $\one_r$ is the vector of $r$ ones. Since all influence weights are positive, a proper type-dependent cohesive set cannot contain a WA individual. Applying Theorem~\ref{thm:consensus} gives the following parameter ranges for consensus from every initial condition:
\begin{equation}\label{eq:controlledcriterion}
 \begin{array}{c|c}
 \Vwa&\text{consensus for every }x(0)\\\hline
 \varnothing&0<q<1/2\\
 \{1\}&0<q<1\\
 \{3\}&0<q<3/4\\
 \V&0<q<1
 \end{array}
\end{equation}
We justify these ranges using type-dependent cohesion throughout. No singleton satisfies the condition: its self-weight is at most $q/2<1/2$. A two-node set meeting both communities also fails, because its member in $L$ assigns it weight $q/2+(1-q)/3=(q+2)/6<1/2$. Any two disjoint admissible sets have at least two members each; on five nodes, at least one must have exactly two.

For all WM, a two-node set contained in $L$ is $L$ itself and requires $q\geq1/2$. A two-node subset of $U$ requires $2q/3\geq1/2$, hence $q\geq3/4$. Thus a disjoint pair is impossible for $q<1/2$, while $L$ and $U$ supply a pair for every $q\geq1/2$. Under WA-$\{1\}$, both proper sets must avoid node~1. Two disjoint sets would then each have two members and partition $\{2,3,4,5\}$; the set containing node~2 would meet both communities and fail the condition. Under WA-$\{3\}$, both sets must similarly partition $\{1,2,4,5\}$ into two pairs. Excluding mixed-community pairs leaves only $L$ and $\{4,5\}$, which require $q\geq1/2$ and $2q/3\geq1/2$, respectively. Finally, for all WA no proper set is type-dependent cohesive because all entries of $A(q)$ are positive. These arguments prove~\eqref{eq:controlledcriterion}, including the excluded boundary values.

\paragraph{Simulation protocol}
We use a grid of $q$ values from $0.05$ to $0.95$ with spacing $0.05$, plus $0.49$, $0.51$, $0.74$, and $0.76$. The same 200 independent uniform initial vectors on $[0,1]^5$ are used for all 23 values and four assignments. With $\gamma_i=0.5$ and $T=10^4$, this gives 18,400 trajectories. As an additional check separate from these trajectories, exhaustive subset enumeration tests the four specified assignments at each $q=t/100$, $t=1,\ldots,99$, giving 396 configurations. This denser verification grid is distinct from the 23-value simulation grid; it does not cover every possible type assignment.

\paragraph{Numerical outcomes and their structural explanation}
Figure~\ref{fig:controlled} compares the criterion for consensus from every initial condition in Theorem~\ref{thm:consensus} with outcomes within the simulation horizon. At $q=0.50$, all-WM does not satisfy that criterion, yet 165/200 sampled trajectories reach the tolerance. Such successes do not contradict the criterion, which concerns every initial condition and asymptotic consensus rather than a finite tolerance. At $q=0.75$, WA-$\{3\}$ reaches it in 129/200 runs, versus 200/200 for WA-$\{1\}$ and all-WA. Their restricted mean hitting times are $3565.65$, $113.655$, and $27.545$ updates. Equal WA fractions can thus yield different consensus properties, while configurations guaranteeing consensus can have different transients. 

Panel~(d) compares all four assignments at $q=0.75$. For all-WM, the type-dependent cohesive groups $L$ and $U$ start in $[0.05,0.15]$ and $[0.7,0.9]$. For WA-$\{3\}$, the relevant pair is instead $L$ and $\{4,5\}$, with those same separated intervals. Lemma~\ref{lem:groupinterval} preserves a gap of at least $0.55$ in both cases, and Theorem~\ref{thm:convergence} gives nonconstant limiting equilibria. For WA-$\{1\}$ and all-WA, Theorem~\ref{thm:consensus} guarantees consensus. Thus the interpretation rests on the structural results, not on the finite curves alone. The boundaries $q=1/2$ for all-WM and $q=3/4$ for WA-$\{3\}$ concern the guarantee of consensus from every initial condition for the particular matrix~\eqref{eq:controlledweights}; they are neither numerical tolerances $\varepsilon$ nor universal thresholds for other networks.

\subsection{Study 3: WA fraction and frequency of reaching tolerance}
\begin{figure*}[t]
 \centering
 \includegraphics[width=\textwidth]{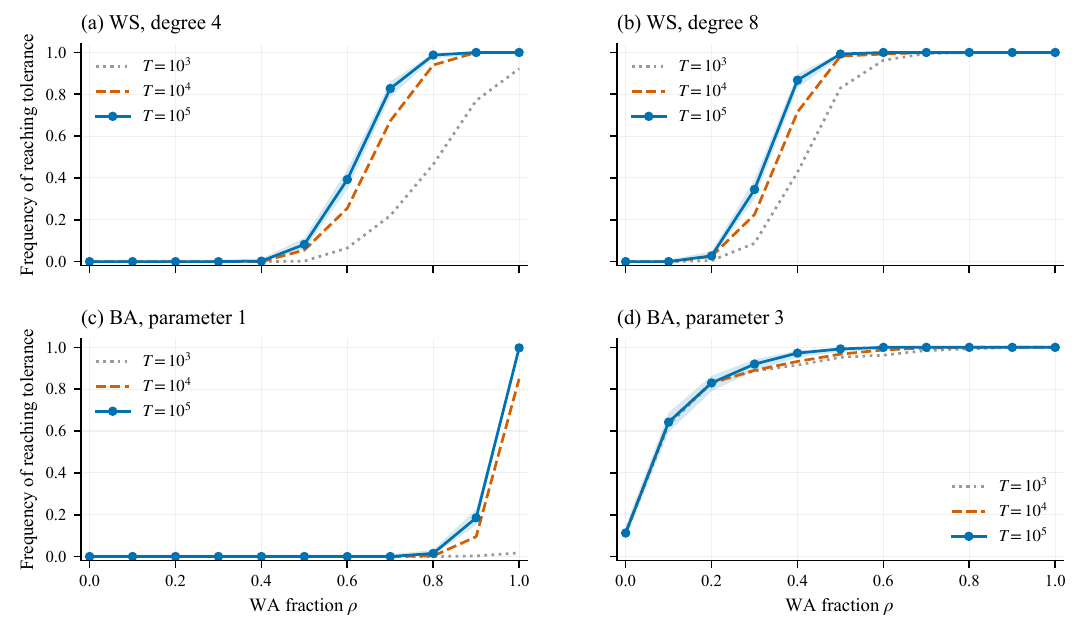}
 \caption{Study~3. Effect of WA fraction for $n=100$. (a,b) Connected Watts--Strogatz networks with initial ring degrees 4 and 8, rewiring probability 0.1. (c,d) Barab\'asi--Albert networks with attachment parameters 1 and 3. Independent directed integer weights from 1 to 100 are normalized by row. Each panel uses 400 independent realizations, each consisting of a network with its directed weights, a uniform initial vector on $[0,1]^{100}$, and a uniformly sampled node permutation. Within each realization, these inputs and $\gamma_i=0.5$ are held fixed across eleven nested WA assignments. Each assignment restarts from that same initial vector. Curves show the frequency of reaching normalized diameter $10^{-4}$ by $T=10^3,10^4,10^5$. Shading gives pointwise 95\% Wilson intervals at $T=10^5$.}\label{fig:fraction}
\end{figure*}
The first two studies isolate the roles of location and weights on small networks. We next examine the effect of WA fraction across random-network realizations. Here the outcome is whether each trajectory reaches the consensus tolerance within the prescribed horizon, not whether consensus is guaranteed from every initial condition. We vary the WA fraction in the four $n=100$ ensembles in Fig.~\ref{fig:fraction}, using Watts--Strogatz (WS)~\cite{watts1998} and Barab\'asi--Albert (BA)~\cite{barabasi1999} networks. Edges are bidirectional, self-weights are zero, and integer weights are normalized by row:
\begin{equation}\label{eq:randomweights}
 a_{ij}=b_{ij}\Big/\sum_h b_{ih}.
\end{equation}
The two directions receive independent weights, so $A$ is generally asymmetric. For each realization, we sample a network with its weights, an initial vector uniformly on $[0,1]^{100}$, and a uniform random permutation $\pi$ of the node set. We then use the nested WA node sets
\begin{equation}\label{eq:nestedtypes}
 \Vwa(\rho)=\{\pi(1),\ldots,\pi(\rho n)\},\qquad
 \rho\in\{0,0.1,\ldots,1\},
\end{equation}
where the chosen fractions make $\rho n$ an integer. This set contains the first $\rho n$ nodes in the permutation, is empty at $\rho=0$, and leaves all other nodes using WM. Increasing $\rho$ retains the existing WA set and converts only the newly included nodes from WM to WA. Within a realization, the network, weights, initial vector, permutation, and $\gamma_i=0.5$ remain fixed. Each assignment restarts from the same initial vector; it does not continue the preceding trajectory. Independence is across realizations, whereas the eleven fractions are paired within a realization. Four ensembles with 400 realizations each give 1600 networks and $4\times400\times11=17{,}600$ trajectories.

\paragraph{Aggregate frequencies and hitting times}
All four frequency curves in Fig.~\ref{fig:fraction} are nondecreasing at $(\varepsilon,T)=(10^{-4},10^5)$. At $\rho=0.5$, the frequencies for WS networks with ring degrees 4 and 8 are 8.25\% and 99.25\%. For BA networks with attachment parameter 1, the frequency is only 18.5\% even at $\rho=0.9$. Hitting times also differ (Table~\ref{tab:waiting}). These differences cannot be attributed to degree or clustering alone, since several network properties vary together.

\begin{table}[t]
 \centering
 \caption{Restricted mean hitting time~\eqref{eq:restrictedmean}, in units of $10^3$ updates, for $\varepsilon=10^{-4}$ and $T=10^5$; 400 independent realizations per ensemble. Multiply each entry by $10^3$ to obtain the number of updates. Runs not reaching the tolerance by $T$ contribute 100 in these units.}\label{tab:waiting}
 \begin{tabular}{@{}lrrr@{}}
 \toprule
 Ensemble&$\rho=0.5$&$\rho=0.9$&$\rho=1$\\
 \midrule
 WS, degree 4&92.755&0.914&0.593\\
 WS, degree 8&2.020&0.203&0.185\\
 BA, parameter 1&100.000&85.424&6.420\\
 BA, parameter 3&1.668&0.053&0.051\\
 \bottomrule
 \end{tabular}
\end{table}

\begin{figure*}[t]
 \centering
 \includegraphics[width=\textwidth]{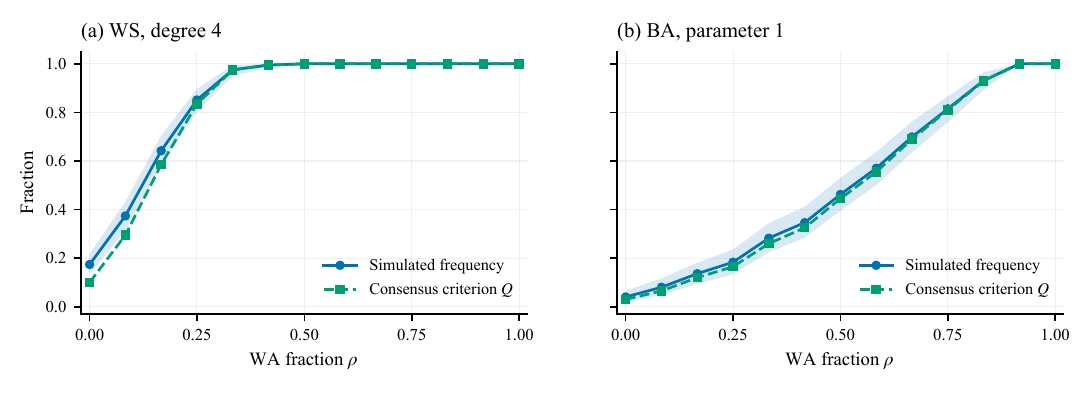}
 \caption{Study~4. Comparison of the consensus criterion in Theorem~\ref{thm:consensus} with simulations for $n=12$: enumeration and trajectory simulations use exactly the same network weights and update-rule assignments. The ensembles are connected Watts--Strogatz (initial ring degree 4, rewiring probability 0.1) and Barab\'asi--Albert (attachment parameter 1), with independent directed weights from $\{1,\ldots,100\}$ normalized by row. Each ensemble uses the same 200 networks at all 13 WA counts. For each network, its node permutation and 20 uniform initial vectors on $[0,1]^{12}$ are reused across WA counts; every trajectory restarts from its assigned initial vector, with $\gamma_i=0.5$. Green curves give the fraction $Q$ of the 200 configurations satisfying the criterion at each WA count. Blue curves average over 20 initial vectors within each network and then over the 200 networks to give the frequency of reaching normalized diameter $10^{-4}$ by $T=10^5$, with pointwise 95\% intervals from 5000 bootstrap resamples of whole networks.}\label{fig:structure}
\end{figure*}

\paragraph{Individual realizations and the scope of the trend}
Individual realizations need not follow the aggregate trend. In seven comparisons of adjacent WA fractions, only the assignment with the lower fraction reaches the tolerance: three in WS degree-8 networks and four in BA networks with attachment parameter 3. Supplementary Section~S2.1 examines only the first archived case, selected by ordering saved file paths and WA fractions, not all seven. An independent 50-digit computation uses its unchanged inputs for 2000 updates. It reproduces the difference at $10^{-4}$, while both assignments reach $10^{-3}$. The persistence of this difference at higher precision substantially weakens an explanation based on double-precision roundoff. It does not establish that the assignments differ only in convergence speed, or that both will eventually reach $10^{-4}$ or exact consensus. Remark~\ref{rem:conversion} preserves an existing consensus guarantee for every initial condition; it does not imply that a trajectory reaching the tolerance within the prescribed horizon will still do so after more individuals are converted to WA. Conversely, only 399/400 all-WA trajectories on BA networks with attachment parameter 1 reach the tolerance by $10^5$, despite guaranteed asymptotic consensus. Supplementary Section~S2 reports counts at all eleven WA fractions and paired comparisons for each ensemble; Section~S3 reports the eight additional parameter settings and their counts.

\subsection{Study 4: consensus criterion and simulations on the same networks}
The empirical frequencies in Study~3 do not identify which configurations guarantee consensus from every initial condition. To distinguish these two questions directly, we use the $n=12$ design in Fig.~\ref{fig:structure}, which permits exhaustive subset enumeration. We compare the criterion and trajectories on each of these networks, rather than reusing the larger networks of Study~3. For each network and WA count, we enumerate all nonempty subsets using exactly the weights and update-rule assignment used for the trajectories. The type-dependent cohesion condition is tested by $2\sum_{j\in S}b_{ij}\geq\sum_jb_{ij}$ for WM members and by equality of internal and total row weights for WA members. Let $Q(\rho)$ denote the fraction of network and update-rule configurations containing no disjoint pair of type-dependent cohesive sets.

For each ensemble and each WA count from 0 through 12, we use the same 200 networks, with 20 initial vectors sampled uniformly on $[0,1]^{12}$ per network. Across WA counts, each network's weights, permutation, and set of 20 initial vectors are reused, with $\gamma_i=0.5$; every configuration is restarted from each of these initial vectors. Thus $2\times200\times13\times20=104{,}000$ trajectories cover 400 networks. At each WA count, the simulated frequency first averages the 20 indicators within each network and then averages these values over the 200 networks. Its trajectory denominator is 4000 per ensemble and count, whereas $Q(\rho)$ has 200 network--assignment configurations in its denominator. The 5000 bootstrap resamples select whole networks and retain all their initial-condition outcomes and WA counts, preserving the within-network dependence.

Figure~\ref{fig:structure} shows that the two frequencies need not coincide. In the all-WM WS ensemble, $Q=10\%$, whereas 17.4\% of trajectories reach the tolerance. For BA networks with six WA individuals, the values are 44.5\% and 46.225\%. All 68,720 trajectories on configurations satisfying the criterion reach the tolerance, as do 1512 on configurations that do not; Supplementary Section~S1 gives the success counts and denominators both by WA count and pooled by criterion category. Thus some sampled trajectories reach the tolerance even on configurations that do not satisfy Theorem~\ref{thm:consensus}'s condition. These observations neither prove consensus for every initial condition on those configurations nor establish a uniform convergence-time bound.

These studies show how both WA fraction and location affect opinion evolution. At the prescribed tolerance and horizon, the empirical frequency of reaching the consensus tolerance is nondecreasing with WA fraction in the sampled ensembles. In contrast, the guarantee for every initial condition depends on the type-dependent cohesive sets of each configuration. This numerical trend is not a monotonicity result for individual realizations or empirical validation using human data.

\section{Conclusion}\label{sec:conclusion}
We characterized equilibria and proved convergence for synchronous discrete-time opinion dynamics combining weighted-average and weighted-median updates. Type-dependent cohesive sets give a necessary and sufficient condition for consensus from every initial condition; when this condition is not satisfied, an open set of initial conditions leads to persistent disagreement. Replacing WM individuals by WA individuals preserves an existing guarantee of consensus.

The numerical studies show that both WA fraction and location matter. The empirical frequency of reaching the prescribed consensus tolerance within the fixed horizon is nondecreasing with WA fraction in the sampled ensembles. Individual realizations need not follow this trend. Future work includes switching update rules, time-varying networks, and estimates of convergence times.

\bibliographystyle{elsarticle-num}
\bibliography{references}

\onecolumn
\setcounter{section}{0}
\setcounter{subsection}{0}
\setcounter{table}{0}
\setcounter{figure}{0}
\setcounter{equation}{0}
\renewcommand{\thesection}{S\arabic{section}}
\renewcommand{\thetable}{S\arabic{table}}
\renewcommand{\thefigure}{S\arabic{figure}}
\renewcommand{\theequation}{S\arabic{equation}}
\renewcommand{\theHsection}{supp.\arabic{section}}
\renewcommand{\theHtable}{supp.\arabic{table}}
\renewcommand{\theHfigure}{supp.\arabic{figure}}
\renewcommand{\theHequation}{supp.\arabic{equation}}
\setlength{\parindent}{1em}
\pdfbookmark[0]{Supplementary material}{supplement}

\begin{center}
{\Large Supplementary material}\\[6pt]
{\large Equilibria and consensus in discrete-time opinion dynamics\\
with mixed weighted-averaging and weighted-median mechanisms}
\end{center}

This supplement describes the numerical protocols and additional checks for the four studies in the main manuscript. It reports complete simulation counts, uncertainty estimates, sensitivity analyses, and implementation checks. All essential theoretical results and their proofs are given in the main manuscript. Here weighted-average (WA) and weighted-median (WM) individuals use the respective response rules defined there; the WM rule projects the current opinion onto the entire weighted-median interval.

Section~\ref{sec:sprotocol} gives the protocols, counts for Studies~2 and~4, and implementation checks. Sections~\ref{sec:paired} and~\ref{sec:sensitivity} extend the WA-fraction comparison of Study~3 with complete counts, paired comparisons, and sensitivity analyses. Section~\ref{sec:times} reports hitting-time distributions and an analytical all-WA benchmark. Section~\ref{sec:reproduction} identifies the reproduction files and the scope of the numerical evidence.

\section{Simulation protocols, counts, and implementation checks}\label{sec:sprotocol}
\subsection{Shared simulation protocol}

\paragraph{Sampling unit and allocation.}
For a network of $n$ individuals, let $\mathcal V^{\mathrm{wa}}$ be the set using WA and let $\rho=|\mathcal V^{\mathrm{wa}}|/n$ be its fraction. For the random-network simulations, one realization consists of a weighted network, an initial opinion vector, adjustment coefficients, and a node permutation. The same inputs are used at all eleven WA fractions $0,0.1,\ldots,1$, with the first $\rho n$ individuals in the permutation assigned to WA. Each trajectory starts from the original initial vector. For the structural analysis, a realization consists of a weighted network and a permutation, from which all thirteen WA counts for $n=12$ are constructed. Each such network also has 20 independently sampled initial vectors, reused across all counts. Neither networks nor initial vectors are resampled when the WA count changes. Table~\ref{tab:scope} summarizes the experiments. Because several trajectories share each network, trajectory counts must be distinguished from the number of independent network realizations.

\begin{table}[h]
\centering
\caption{Summary of simulations and computational checks. Comparisons across tolerances and horizons reuse recorded hitting times rather than generating new trajectories.}\label{tab:scope}
\small
\begin{tabular}{@{}p{35mm}p{82mm}r@{}}
\toprule
Study & Design & Evaluations\\
\midrule
\raggedright Study 1: WA locations & Six individuals; four partitions; 2000 updates each & 4 trajectories\\
\raggedright Study 2: influence weights & Five individuals; 23 weights; four assignments; 200 initial vectors & 18,400 trajectories\\
\raggedright Study 2: separated opinions & Five individuals, $q=0.75$; four partitions & 4 trajectories\\
\raggedright Study 3: WA fraction & Four ensembles; 400 realizations each; 11 fractions & 17,600 trajectories\\
\raggedright Study 4: subset enumeration & Two ensembles; 200 realizations each; 13 WA counts & 5200 configurations\\
\raggedright Study 4: trajectories & 400 networks; 20 initial vectors per network; 13 counts & 104,000 trajectories\\
\raggedright Sensitivity (Study 3) & Eight settings; 200 realizations each; 11 fractions & 17,600 trajectories\\
\raggedright Precision check (Study 3) & One saved realization; two assignments; 2000 updates in double and 50-digit arithmetic & 2 repeated cases\\
\raggedright All-WA benchmark (Study 1) & Six individuals, all WA; 11 coefficient vectors & 11 trajectories\\
\bottomrule
\end{tabular}
\end{table}

\paragraph{Network implementation.}
Networks are generated with NetworkX 3.2.1~\cite{networkx321,hagberg2008}. Watts--Strogatz (WS) networks use \path{watts_strogatz_graph}, rewiring probability 0.1, and initial ring degree 4 or 8. Disconnected realizations are discarded and resampled; the number of attempts is recorded. Barab\'asi--Albert (BA) networks use \path{barabasi_albert_graph}, attachment parameter $m_{\mathrm{BA}}=1$ or 3, and the default initial star on $m_{\mathrm{BA}}+1$ vertices. Each undirected edge is replaced by two directed edges, and self-loops are excluded. The two directions receive independent integer weights, uniformly distributed on $\{1,\ldots,100\}$, which are then normalized by row. Write $B=(b_{ij})$ for these unnormalized weights and $A=(a_{ij})$ for the influence matrix, where $a_{ij}=b_{ij}/\sum_h b_{ih}$. The resulting matrices need not be symmetric, but their edge sets are bidirectional; the theory also allows other directed networks. The master seed is 20260926, with separate random streams for networks, weights, initial opinions, coefficients, and permutations. Initial opinions in the baseline are independent $\mathcal U[0,1]$ draws, and permutations are sampled uniformly from all node orderings. The fixed network of Study~1 instead uses the deterministic opinions and heterogeneous coefficients specified in its main-text figure caption.

To specify the random streams, the first batch uses NumPy's \texttt{default\_rng} with \texttt{SeedSequence} entropy
\[
 (20260926,f,n,d,r,s).
\]
Here $f=1$ for WS and $f=2$ for BA, $d$ is the WS ring degree or BA attachment parameter, $r$ is the zero-based realization index, and $s=0,1,2,3,4$ selects the graph, weights, initial opinions, coefficients, or permutation stream, respectively. The graph stream draws successive integer seeds from $\{0,\ldots,2^{32}-2\}$ until a connected graph is obtained. The weight stream draws an $n\times n$ integer array on $\{1,\ldots,100\}$ and sets entries to zero where the graph has no edge. The initial stream supplies $n$ independent uniform draws; the permutation stream supplies a uniform node ordering. All sampled inputs and seeds are retained for reproduction.

\paragraph{Median ties and updates.}
The implementation computes the entire weighted-median interval and projects the current opinion onto it. When exactly half of the weight lies on either side of an interval, the rule does not select a fixed upper or lower endpoint. All responses use the same previous opinion vector, followed by synchronous updates with $\gamma_i=0.5$ in the main settings. Let $H_i(x)$ denote the response before adjustment: the weighted average for a WA individual and the selected weighted median for a WM individual. The equilibrium residual is $R(x)=\max_i|H_i(x)-x_i|$. It measures violation of $H(x)=x$, not disagreement or distance to the limiting state. Trajectories are computed in double precision. For a cumulative set of ordered neighbors $S$, median selection compares $2\sum_{j\in S}b_{ij}$ with $\sum_jb_{ij}$. Equality therefore identifies an exact half-weight tie before any floating-point normalization. Study~1 uses row sum 2; Study~2 uses the rational representation below; random networks use sampled integer weights. For type-dependent cohesion, WM members require internal weight at least half the total, whereas WA members require the full total. The subset routine implements the former equivalently as an integer comparison with the ceiling of half the row sum, avoiding overflow from doubling. Exact weight comparisons do not imply exact opinion arithmetic.

\paragraph{Stopping and classification.}
We use the normalized diameter $E(k)=(\max_i x_i(k)-\min_i x_i(k))/(\max_i x_i(0)-\min_i x_i(0))$. For a constant initial vector, we set $E(k)=0$ and the hitting time to zero. First hitting times of $10^{-3}$, $10^{-4}$, and $10^{-5}$ are recorded up to $10^5$ updates, or $10^4$ in Study~2. A run may stop at the strictest tolerance or an unchanged floating-point state. All realizations remain in the denominator at every horizon and tolerance. Failure to reach a tolerance does not establish asymptotic dissensus; floating-point stationarity or a small residual does not certify an exact equilibrium.

\subsection{Protocols and complete counts for Studies 2 and 4}
The next two protocols specify the weight variation in Study~2 and the multiple-initial-condition comparison in Study~4.

\paragraph{Study 2: influence weights.}
For $q=t/100$, the five-node influence matrix is represented by an integer matrix with row sum 600. For communities of sizes 2 and 3, the four blocks of $B$ have constant entries $3t$, $2(100-t)$, $3(100-t)$, and $2t$, ordered from left to right and top to bottom. These integers specify the rational weights exactly before floating-point evaluation. The 23 simulation numerators are
\[
5,10,15,20,25,30,35,40,45,49,50,51,55,60,65,70,74,75,76,80,85,90,95.
\]
The four assignments are all-WM, WA-$\{1\}$, WA-$\{3\}$, and all-WA. All four assignments and all 23 values of $q$ use the same 200 initial vectors, drawn by \texttt{default\_rng(SeedSequence([20261002,20]))} as a $200\times5$ uniform array. Each trajectory restarts from the corresponding row, with $\gamma_i=0.5$ and horizon $10^4$. The illustrative trajectories with initially separated opinions are deterministic examples and are not included in these 200-sample estimates.

\paragraph{Study 4: multiple initial conditions on each network.}
For each network and permutation realization, 20 independent initial vectors are sampled uniformly on $[0,1]^{12}$ and reused across all WA counts. The unchanged network and permutation come from the first batch's two $n=12$ ensembles. For family index $f'=0$ (WS) or $1$ (BA) and zero-based network index $r$, the new initial array is generated by \texttt{default\_rng} from a \texttt{SeedSequence} with entropy $(20261002,10,f',r)$. The same 200 networks in each ensemble are evaluated at each of the 13 counts, giving $2\times200\times13\times20=104{,}000$ trajectories. Let $I_{rs}(\rho)$ indicate whether the trajectory for network realization $r$ and initial vector $s$ reaches the prescribed tolerance by the horizon. The plotted estimate is
\begin{equation}
 \widehat p(\rho)=\frac1{200}\sum_{r=1}^{200}
 \left(\frac1{20}\sum_{s=1}^{20}I_{rs}(\rho)\right).
\end{equation}
The 5000 bootstrap resamples select 200 whole network realizations indexed by $r$ with replacement, retaining all associated initial conditions and WA counts. The bootstrap generators use seeds 20261002 for WS and 20261003 for BA. This preserves the dependence among observations on each network. The bands are pointwise percentile intervals for this two-stage sampling design; an interval reduced to the value one does not establish that the underlying probability is one. Table~\ref{tab:conditional} reports pooled counts rather than confidence intervals based on independent trajectories.

\begin{table}[h]
\centering
\caption{Numbers of trajectories reaching normalized diameter $10^{-4}$ by $10^5$ updates, pooled over all 13 WA counts. Each denominator is the total number of trajectories in the indicated category. Observations sharing a network are dependent.}\label{tab:conditional}
\begin{tabular}{@{}lrr@{}}
\toprule
Ensemble & Criterion satisfied & Criterion not satisfied\\
\midrule
WS, degree 4 & 43,140 / 43,140 & 906 / 8860\\
BA, parameter 1 & 25,580 / 25,580 & 606 / 26,420\\
\bottomrule
\end{tabular}
\end{table}

For completeness, Table~\ref{tab:controlledcounts} gives all 23 controlled-weight counts. Table~\ref{tab:structurecounts} gives the structural and trajectory counts at each WA count. For each ensemble and count, define $Q(\rho)=G/200$ as the fraction of configurations satisfying the criterion for consensus from every initial condition, while the simulated frequency has denominator $200\times20=4000$ trajectories. Within criterion categories the denominators are $20G$ and $20(200-G)$, where $G$ is the number of configurations satisfying the criterion. The estimates below continue to retain the network realization as the independent sampling unit.

\begin{table}[p]
\centering
\caption{Controlled-weight study: numbers reaching $E(k)\leq10^{-4}$ by $10^4$ updates. Every entry has denominator 200; all columns and rows reuse the same initial vectors.}\label{tab:controlledcounts}
\begin{tabular}{@{}rrrrr@{}}
\toprule
$q$ & All-WM & WA-$\{1\}$ & WA-$\{3\}$ & All-WA\\
\midrule
0.05 & 200 & 200 & 200 & 200\\
0.10 & 200 & 200 & 200 & 200\\
0.15 & 200 & 200 & 200 & 200\\
0.20 & 200 & 200 & 200 & 200\\
0.25 & 200 & 200 & 200 & 200\\
0.30 & 200 & 200 & 200 & 200\\
0.35 & 200 & 200 & 200 & 200\\
0.40 & 200 & 200 & 200 & 200\\
0.45 & 200 & 200 & 200 & 200\\
0.49 & 200 & 200 & 200 & 200\\
0.50 & 165 & 200 & 200 & 200\\
0.51 & 165 & 200 & 200 & 200\\
0.55 & 165 & 200 & 200 & 200\\
0.60 & 165 & 200 & 200 & 200\\
0.65 & 165 & 200 & 200 & 200\\
0.70 & 165 & 200 & 200 & 200\\
0.74 & 165 & 200 & 200 & 200\\
0.75 & 79 & 200 & 129 & 200\\
0.76 & 79 & 200 & 79 & 200\\
0.80 & 79 & 200 & 79 & 200\\
0.85 & 79 & 200 & 77 & 200\\
0.90 & 79 & 200 & 76 & 200\\
0.95 & 79 & 200 & 73 & 200\\
\bottomrule
\end{tabular}
\end{table}

\begin{table}[p]
\centering
\caption{Small-network study at each WA count $c$, with $\rho=c/12$. For each ensemble, $G$ is the number of configurations satisfying the consensus criterion (out of 200); $S$ is the number of successful trajectories (out of 4000); $S_{\rm no}$ counts successes among configurations not satisfying it, with denominator $20(200-G)$. All $20G$ trajectories on satisfying configurations succeed. Parameters are $\varepsilon=10^{-4}$ and $T=10^5$.}\label{tab:structurecounts}
\begin{tabular}{@{}rrrrrrr@{}}
\toprule
&\multicolumn{3}{c}{WS, degree 4}&\multicolumn{3}{c}{BA, parameter 1}\\
$c$ & $G$ & $S$ & $S_{\rm no}$ & $G$ & $S$ & $S_{\rm no}$\\
\midrule
0 & 20 & 696 & 296 & 6 & 161 & 41\\
1 & 59 & 1496 & 316 & 13 & 321 & 61\\
2 & 117 & 2568 & 228 & 24 & 544 & 64\\
3 & 167 & 3405 & 65 & 33 & 734 & 74\\
4 & 195 & 3901 & 1 & 52 & 1130 & 90\\
5 & 199 & 3980 & 0 & 65 & 1385 & 85\\
6 & 200 & 4000 & 0 & 89 & 1849 & 69\\
7 & 200 & 4000 & 0 & 111 & 2281 & 61\\
8 & 200 & 4000 & 0 & 138 & 2797 & 37\\
9 & 200 & 4000 & 0 & 162 & 3259 & 19\\
10 & 200 & 4000 & 0 & 186 & 3725 & 5\\
11 & 200 & 4000 & 0 & 200 & 4000 & 0\\
12 & 200 & 4000 & 0 & 200 & 4000 & 0\\
\bottomrule
\end{tabular}
\end{table}

\subsection{Implementation checks}
\paragraph{Checks for the original simulations.}
The initial implementation audit passed 25 unit tests, reproduced 396 saved trajectories, recomputed all 5200 structural configurations, and checked 800 pairs of parameter-sensitivity and baseline realizations with the same network size. All 3600 random-sample files in this batch were checked against the specified protocol and recorded hashes. Pilot samples were excluded from the reported statistics. No computed opinion left its initial opinion interval. These checks concern the implementation and numerical records; they do not replace the mathematical proofs.

\paragraph{Checks for the controlled-weight and multiple-initial-condition studies.}
The second batch uses seed 20261002 and leaves all original samples unchanged. Its 400 small-network files include hashes identifying the original network and permutation inputs. The consensus criterion was recomputed for each configuration, and 260 trajectories were reproduced exactly: every twentieth network, all 13 WA counts, and initial-vector index 7. A further 184 trajectories were reproduced for the controlled-weight study, covering all 23 weights and four assignments at initial-vector indices 7 and 193. An independent exhaustive subset test checked the four specified assignments (all-WM, WA-$\{1\}$, WA-$\{3\}$, all-WA) at each $q=t/100$, $t=1,\ldots,99$, giving 396 configurations. This additional grid is distinct from the 23-value trajectory grid and does not enumerate all possible type assignments. Another 240 one-step updates near these thresholds were compared with an independent Python implementation. No opinion left its initial interval in the multiple-initial-condition study. These deterministic checks do not add independent statistical trials.

\section{Study 3: complete counts and paired comparisons}\label{sec:paired}
For the four main random-network ensembles, the bands at each WA fraction are pointwise 95\% Wilson intervals over independent realizations. Comparisons between fractions retain the pairing of inputs. Each ensemble contains $N=400$ independent realizations. Let $I_r(\rho)$ indicate whether realization $r$ reaches normalized diameter $10^{-4}$ by $10^5$ updates. The estimated difference between fractions $\rho_a$ and $\rho_b$ is
\begin{equation}
 \widehat\Delta(\rho_a,\rho_b)
 =\frac{1}{N}\sum_{r=1}^{N}\bigl(I_r(\rho_b)-I_r(\rho_a)\bigr).
\end{equation}
We use 5000 bootstrap resamples of complete realizations to compute percentile intervals for these differences. The seeds are 20260930, 20260931, 20260932, and 20260933 for WS degree 4, WS degree 8, BA parameter 1, and BA parameter 3, respectively; each resample retains all eleven WA fractions. Table~\ref{tab:contrasts} reports two comparisons; differences between every pair of adjacent fractions are also retained in the statistical summary. The intervals are pointwise, not simultaneous. If all observed paired differences are zero, the resulting interval is reduced to zero; this does not establish that the underlying probability difference is exactly zero.

\begin{table}[h]
\centering
\caption{Paired differences in the frequency of reaching $E(k)\leq10^{-4}$ by $10^5$ updates, in percentage points; $N=400$ per ensemble. Parentheses contain pointwise 95\% bootstrap percentile intervals.}\label{tab:contrasts}
\begin{tabular}{@{}lrr@{}}
\toprule
Ensemble & $\rho:0\to0.5$ & $\rho:0.5\to1$\\
\midrule
WS, degree 4 & $8.25\ (5.50,11.00)$ & $91.75\ (89.00,94.50)$\\
WS, degree 8 & $99.25\ (98.25,100.00)$ & $0.75\ (0.00,1.75)$\\
BA, parameter 1 & $0.00\ (0.00,0.00)$ & $99.75\ (99.25,100.00)$\\
BA, parameter 3 & $88.00\ (84.75,91.00)$ & $0.75\ (0.00,1.75)$\\
\bottomrule
\end{tabular}
\end{table}

Tables~\ref{tab:maincounts} and~\ref{tab:robustcounts} report the counts at every WA fraction. All four main curves and eight supplementary curves are nondecreasing. Nevertheless, in seven paired comparisons in the main data, only the assignment with the lower of two adjacent WA fractions reaches the tolerance: two at $0.2\to0.3$ and one at $0.3\to0.4$ in WS degree-8 networks, and four at $0\to0.1$ in BA parameter-3 networks. One further case occurs at $0.6\to0.7$ for WS degree-4 networks with $n=50$. Thus the aggregate trend does not hold for every individual realization.

\begin{table}[h]
\centering
\caption{Numbers of main-study trajectories reaching $E(k)\leq10^{-4}$ by $10^5$ updates. Every entry has denominator 400.}\label{tab:maincounts}
\small
\setlength{\tabcolsep}{5pt}
\begin{tabular}{@{}lrrrrrrrrrrr@{}}
\toprule
$\rho$ & 0 & .1 & .2 & .3 & .4 & .5 & .6 & .7 & .8 & .9 & 1\\
\midrule
WS, degree 4 &0&0&0&0&1&33&157&331&395&400&400\\
WS, degree 8 &0&0&11&138&347&397&400&400&400&400&400\\
BA, parameter 1 &0&0&0&0&0&0&0&0&6&74&399\\
BA, parameter 3 &45&257&332&368&389&397&400&400&400&400&400\\
\bottomrule
\end{tabular}
\end{table}

\subsection{High-precision check of one realization}\label{sec:reversal}
We examine one of these seven comparisons without generating a new sample. The case is selected after the main simulations by ordering the relative paths of saved sample files lexicographically and then the lower WA fractions numerically. The first case is \path{main/ba_m3/sample_0211.npz}, with $n=100$, attachment parameter 3, uniform initial opinions, and $\gamma_i=0.5$. The saved network, directed integer weights, initial vector, and node permutation are unchanged. We compare the all-WM assignment with the assignment obtained by converting the first ten individuals in the permutation to WA: nodes $\{12,17,37,45,47,56,58,71,83,84\}$. This is a precision check of an observed case, not an independent trial or an estimate of how frequently such cases occur.

\begin{figure}[htp]
\centering
\includegraphics[width=\textwidth]{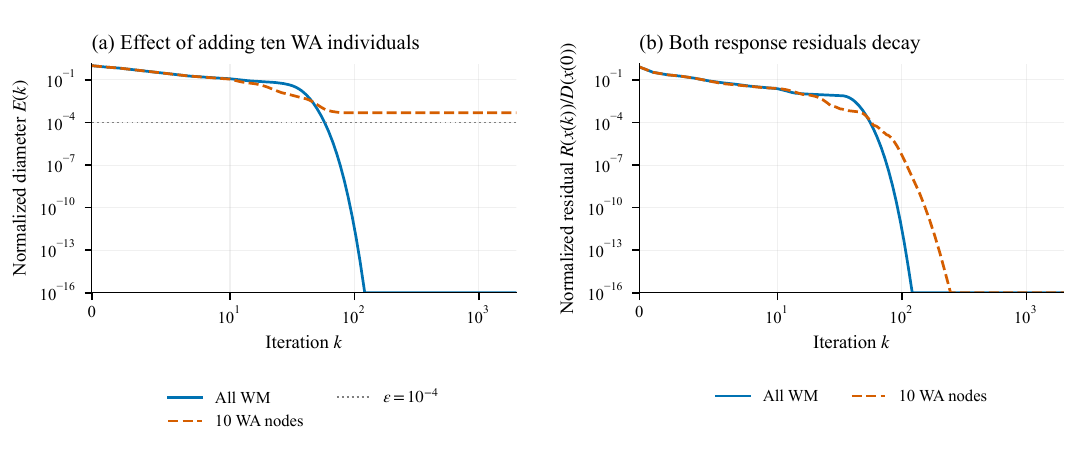}
\caption{High-precision recomputation of one saved BA realization with attachment parameter 3, $n=100$, and $\gamma_i=0.5$. Both assignments use the same network, weights, and uniformly sampled initial vector. Blue solid curves represent all WM; orange dashed curves represent ten WA individuals. The calculation uses 50-digit arithmetic for 2000 updates. (a) Normalized opinion diameter; the dotted line marks $10^{-4}$. (b) Normalized response residual. Time is linear through step 10 and logarithmic thereafter. Values below $10^{-16}$ are plotted at that value. Section~\ref{sec:reversal} specifies the selected case and WA node set; neither curve is an ensemble average.}\label{fig:reversal}
\end{figure}

We recompute 2000 updates with the independent Python implementation and 50-digit decimal arithmetic, using the exact values of the saved binary inputs and integer comparisons at the median threshold. The first hitting times for $\varepsilon=(10^{-3},10^{-4},10^{-5})$ are $(50,58,65)$ for all WM. With ten WA individuals, the first tolerance is reached at step 56, but the other two are not reached, in agreement with the archived records. At step 2000, the normalized diameter for the latter assignment is approximately $4.85618\times10^{-4}$, while its normalized response residual is below $1.1\times10^{-50}$. The largest componentwise difference between the double-precision and decimal trajectories, after conversion of decimal states to double precision for comparison, is below $4.5\times10^{-16}$. The 50-digit calculation covers only these two assignments of this one archived realization, not the other six cases. The original simulation records use a horizon of $10^5$ with the stopping rules of Section~S1; the 2000-step precision check is a separate replay, not a claim of 50-digit computation over that entire horizon.

Figure~\ref{fig:reversal} shows that the difference in outcomes at tolerance $10^{-4}$ persists under higher-precision computation; the remaining opinion diameter is much larger than the observed roundoff discrepancy. Both assignments reach $10^{-3}$, so the finite-horizon comparison depends on the chosen tolerance. The precision check weakens an explanation based on double-precision rounding; it does not show that the assignments differ only in convergence speed or that both assignments eventually reach $10^{-4}$ or exact consensus. The small residual does not certify an exact nonconstant limit. Nor does this example contradict the result on replacing WM individuals by WA individuals: that result preserves an existing guarantee of asymptotic consensus for every initial condition, not whether each particular trajectory reaches a tolerance within a fixed number of updates. All seven cases remain in the reported counts. We next examine whether the aggregate trend persists when simulation parameters change.

\section{Study 3: sensitivity to simulation parameters}\label{sec:sensitivity}
We examine whether the WA-fraction comparison depends on the simulation setting or the numerical tolerance. The first comparison changes one model input at a time; the second reuses the saved trajectories at different tolerances.

\subsection{Network size, adjustment coefficients, and initial opinions}
\begin{figure}[tp]
\centering
\includegraphics[width=\textwidth,height=0.84\textheight,keepaspectratio]{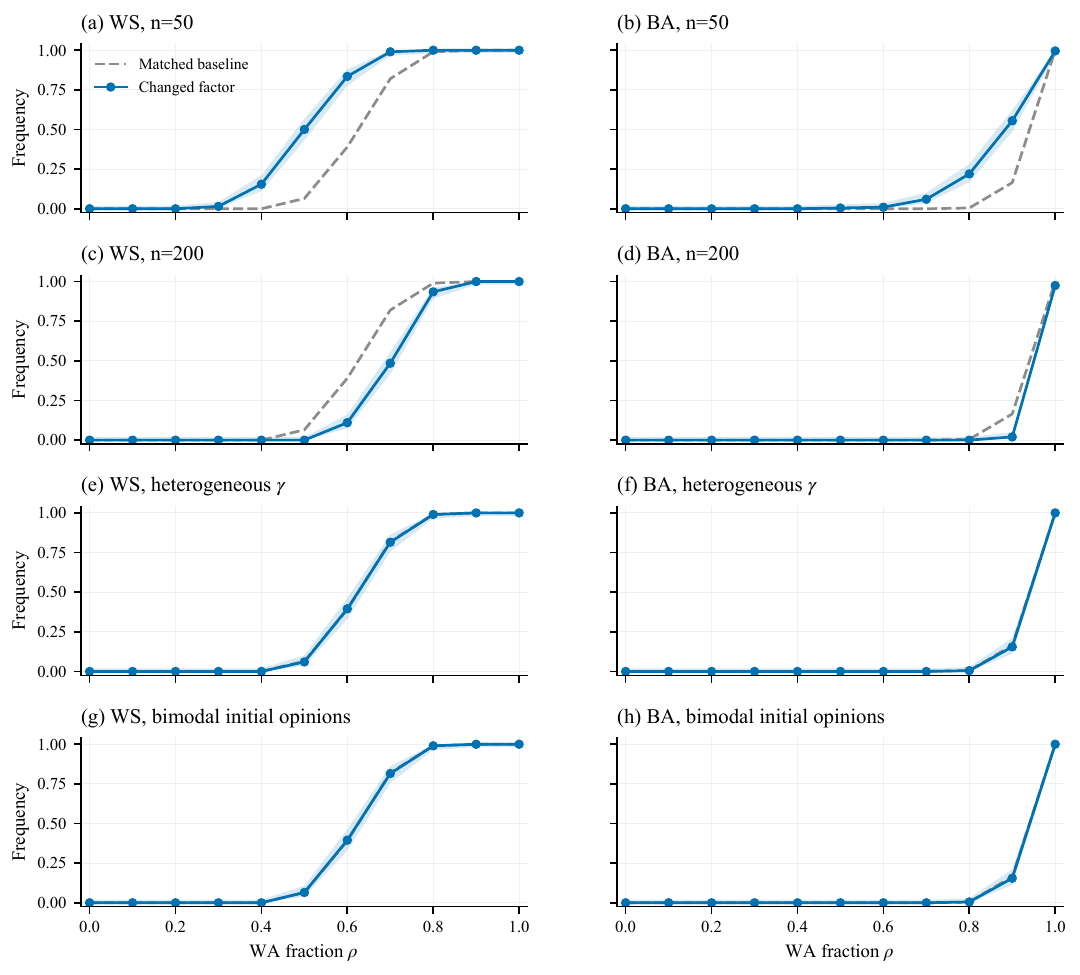}
\caption{Sensitivity of the frequency of reaching normalized diameter $10^{-4}$ by $10^5$ updates. Columns show connected WS networks (initial ring degree 4, rewiring probability 0.1) and BA networks (attachment parameter 1), with independent directed integer weights from 1 to 100 normalized by row. Rows vary one factor: size $n=50$ or 200, coefficients sampled independently from $\mathcal U[0.2,0.8]$, or initial opinions sampled from the equal mixture of $\mathcal U[0,0.2]$ and $\mathcal U[0.8,1]$. Colored curves use 200 realizations; shading gives pointwise 95\% Wilson intervals. Gray dashed curves use 200 baseline realizations with $n=100$, $\gamma_i=0.5$, and uniform $[0,1]$ initial opinions. Other inputs are shared when the network size is fixed.}\label{fig:robustness}
\end{figure}

The supplementary simulations use WS networks with initial ring degree 4 or BA networks with attachment parameter 1. We vary one baseline factor at a time: network size is changed to $n=50$ or $200$; adjustment coefficients are sampled independently and uniformly from $[0.2,0.8]$; or initial opinions are sampled independently from $\tfrac12\mathcal U[0,0.2]+\tfrac12\mathcal U[0.8,1]$. The mixture assigns equal probability to the two intervals but does not require exactly half of the individuals in each. The two network families and four one-factor changes give eight settings, each with 200 independent realizations and eleven nested WA fractions. Coefficients are fixed along each trajectory even when they differ between individuals. For bimodal opinions, a uniform draw $u$ from the original initial-opinion stream is mapped to $0.4u$ for $u<1/2$ and $0.8+0.4(u-1/2)$ otherwise. This gives the stated mixture while retaining the coupling to the baseline draw.

At fixed size $n=100$, each realization shares its unchanged inputs with the corresponding one among the first 200 baseline realizations. In particular, changing the coefficients or initial opinions leaves the network, weights, and node permutation unchanged. Networks of different sizes are generated separately. The gray reference curves in Fig.~\ref{fig:robustness} use the first 200 baseline realizations, rather than the 400 realizations in the main figure, so both curves in each comparison have the same denominator.

\begin{table}[!htbp]
\centering
\caption{Numbers of supplementary trajectories reaching $E(k)\leq10^{-4}$ by $10^5$ updates. Every entry has denominator 200. ``Heterogeneous'' varies $\gamma$ only; ``bimodal'' varies initial opinions only.}\label{tab:robustcounts}
\small
\setlength{\tabcolsep}{4pt}
\begin{tabular}{@{}lrrrrrrrrrrr@{}}
\toprule
$\rho$ & 0 & .1 & .2 & .3 & .4 & .5 & .6 & .7 & .8 & .9 & 1\\
\midrule
WS, $n=50$ &0&0&0&3&31&100&167&198&200&200&200\\
WS, $n=200$ &0&0&0&0&0&0&22&97&187&200&200\\
WS, heterogeneous &0&0&0&0&0&12&79&163&198&200&200\\
WS, bimodal &0&0&0&0&0&13&79&163&198&200&200\\
BA, $n=50$ &0&0&0&0&0&1&2&12&44&111&199\\
BA, $n=200$ &0&0&0&0&0&0&0&0&0&4&195\\
BA, heterogeneous &0&0&0&0&0&0&0&0&1&31&200\\
BA, bimodal &0&0&0&0&0&0&0&0&1&31&200\\
\bottomrule
\end{tabular}
\end{table}

Changing network size visibly shifts the frequency curves. At fixed size, changes in coefficients and initial distributions give curves close to their matched baselines in these settings; this does not establish invariance with respect to those parameters. In the all-WA BA case with $n=200$, only 195/200 runs reach the tolerance by $10^5$ updates. The connected network with bidirectional edges guarantees asymptotic consensus, but the five remaining runs are still recorded as not reaching the tolerance within the horizon. 

\clearpage
\subsection{Consensus tolerance}
\begin{figure}[!ht]
\centering
\includegraphics[width=\textwidth]{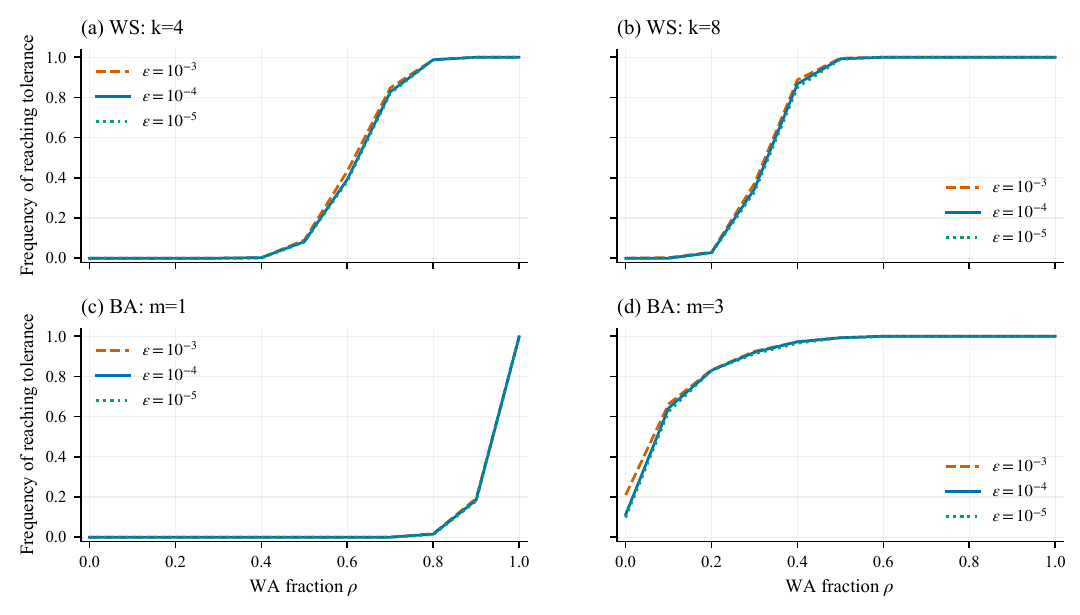}
\caption{Sensitivity to the consensus tolerance, with $n=100$, $\gamma_i=0.5$, uniform $[0,1]$ initial opinions, and 400 realizations per ensemble. Panels show connected WS networks (initial ring degrees 4 and 8, rewiring probability 0.1) and BA networks (attachment parameters 1 and 3), with independent directed weights from $\{1,\ldots,100\}$ normalized by row. Orange dashed, blue solid, and green dotted curves show the frequencies of reaching normalized diameters $10^{-3}$, $10^{-4}$, and $10^{-5}$ by $T=10^5$, respectively. The same realizations are used at each tolerance; confidence intervals are not shown.}\label{fig:tolerance}
\end{figure}
Figure~\ref{fig:tolerance} separately compares tolerances using all 400 main realizations, without generating additional trajectories.

The nondecreasing empirical frequency with WA fraction persists in these tested settings, but the finite-horizon values depend on network size and tolerance. We next examine the time required to reach the tolerance and compare the implementation with an analytical result for the all-WA model.

\clearpage
\section{Hitting-time distributions and an all-WA benchmark}\label{sec:times}
We first examine the time required to reach the tolerance in Study~3. We then use the network of Study~1 with all individuals assigned to WA to check the numerical implementation against an analytical limit and distinguish the roles of adjustment coefficients.

\subsection{Hitting times in the WA-fraction study}
Figure~\ref{fig:attainment} plots the fraction of trajectories reaching normalized diameter $10^{-4}$ by time $t$. The denominator is always 400, including trajectories that do not reach the tolerance, so the curves are not conditioned on reaching it. Summing the fractions that have not yet reached the tolerance over integer time gives the restricted mean hitting time in the main manuscript, equivalently computed by averaging the times capped at $T$. For runs stopped before $T$, the last recorded diameter is not treated as $D(x(T))$.

\begin{figure}[!ht]
\centering
\includegraphics[width=\textwidth]{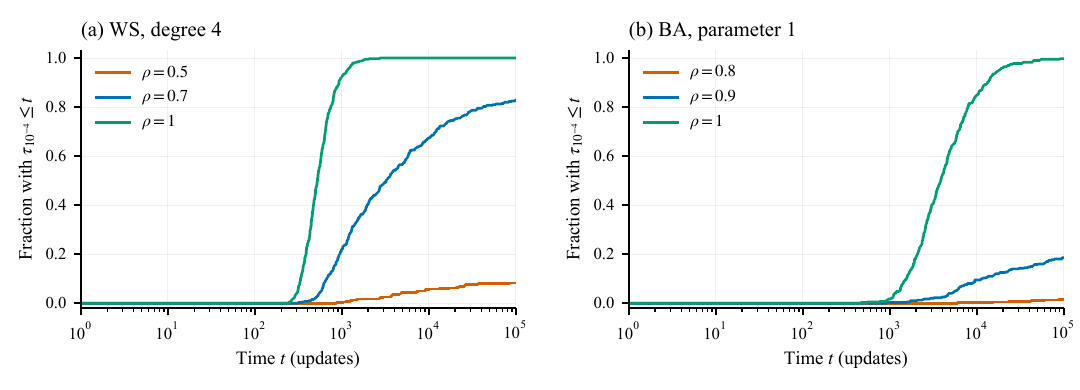}
\caption{Empirical hitting-time distributions on $n=100$ connected WS networks (initial ring degree 4, rewiring probability 0.1) and BA networks (attachment parameter 1). Independent directed integer weights from 1 to 100 are normalized by row; initial vectors are uniform on $[0,1]^{100}$ and $\gamma_i=0.5$. Curves correspond to the labeled WA fractions and show the fraction of 400 paired realizations reaching normalized diameter $10^{-4}$ by time $t$. All realizations remain in the denominator, including those not reaching the tolerance.}\label{fig:attainment}
\end{figure}

\subsection{Adjustment coefficients in the all-WA model}
\begin{figure}[tp]
\centering
\includegraphics[width=\textwidth]{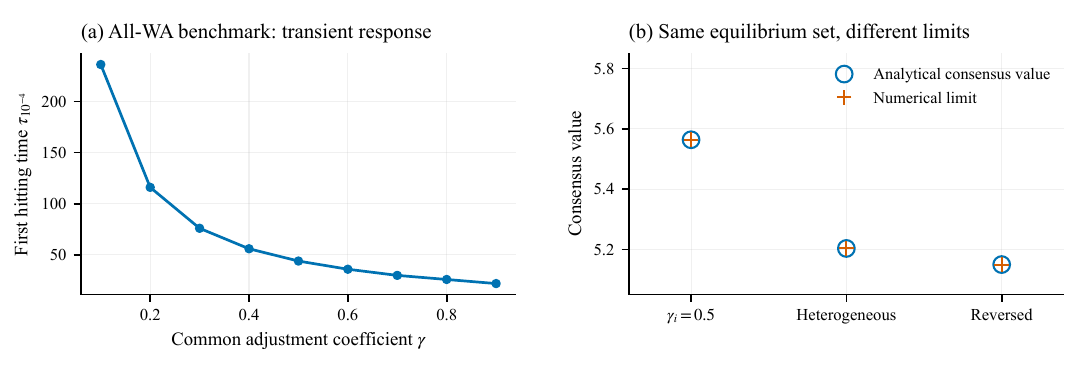}
\caption{All-WA benchmark on the six-node directed network of Study~1, with initial opinions $(8.7594,\allowbreak5.5016,\allowbreak6.2248,\allowbreak5.8704,\allowbreak2.0774,\allowbreak3.0125)^\top$. (a) First hitting time of normalized opinion diameter $10^{-4}$ for common adjustment coefficients $0.1,0.2,\ldots,0.9$. (b) Analytical consensus values and numerical values after 5000 updates for common $\gamma_i=0.5$, heterogeneous $\gamma=(0.6443,\allowbreak0.3786,\allowbreak0.8116,\allowbreak0.5328,\allowbreak0.3507,\allowbreak0.9390)^\top$, and the same vector in reversed node order. Each marker represents one deterministic run; analytical and numerical values agree to the reported precision.}\label{fig:adjustment}
\end{figure}
To check the implementation against an analytical solution, consider the six-node network and initial state of Study~1 with all individuals assigned to WA. Write $\Gamma=\operatorname{diag}(\gamma_i)$ with $0<\gamma_i<1$, and let $I$ be the $6\times6$ identity matrix. The all-WA transition matrix is $P=I-\Gamma+\Gamma A$. Its fixed-point equation satisfies
\[
 Px=x\quad\Longleftrightarrow\quad \Gamma(A-I)x=0
 \quad\Longleftrightarrow\quad Ax=x,
\]
because $\Gamma$ is invertible. Thus changing these coefficients does not change the equilibrium set. The network of Study~1 is strongly connected, so its row-stochastic matrix $A$ is irreducible. Let $\pi$ here denote its stationary probability vector, not the sampling permutation: $\pi^\top A=\pi^\top$, $\pi>0$, and $\pi^\top\mathbf1=1$. For this irreducible linear averaging system, the normalized left eigenvector of $P$ and the consensus limit are
\begin{equation}\label{eq:walimit}
 \nu_i=\frac{\pi_i/\gamma_i}{\sum_j\pi_j/\gamma_j},\qquad
 x(k)\longrightarrow (\nu^\top x(0))\mathbf1.
\end{equation}
Indeed, $\nu^\top\Gamma$ is proportional to $\pi^\top$, giving
\[
 \nu^\top P=\nu^\top+\nu^\top\Gamma(A-I)=\nu^\top.
\]
Since consensus follows from Theorem~11 in the main manuscript for this strongly connected all-WA network, the conserved quantity $\nu^\top x(k)$ determines its value. A common adjustment coefficient changes the transient behavior but not the consensus value $\pi^\top x(0)$; unequal coefficients can change that value. This is a standard consequence of linear averaging, not a consensus-value formula for the mixed model.

\begin{samepage}
We test common coefficients $0.1,0.2,\ldots,0.9$, the original heterogeneous coefficient vector, and that vector in reversed node order, computing 5000 updates for each of the 11 cases. Figure~\ref{fig:adjustment} shows hitting times ranging from 236 to 22 updates for the common coefficients. The common-coefficient consensus value is $5.5637923$; the two heterogeneous values are $5.2030201$ and $5.1495244$. The largest discrepancy between an analytical limit and any final component is $1.45\times10^{-13}$. These results check the linear special case; neither the observed ordering of hitting times nor the consensus-value formula is asserted for arbitrary mixed systems.
\end{samepage}

\section{Code organization and scope of the numerical evidence}\label{sec:reproduction}
\paragraph{Reproduction files.}
The model is implemented in \path{dynamics.py} and \path{dynamics_kernel.cpp}. The file \path{structure.py} checks type-dependent cohesive sets, and \path{experiments.py} generates the first simulation batch. Associated tests check both the update and structural routines. The script \path{audit_and_summarize.py} reconstructs the statistics for that batch. Software versions, seeds, protocols, and file hashes are recorded in the numerical archive, which is separate from the manuscript compilation package.

The second batch is implemented in \path{upgrade/}: \path{protocol.json} records its sampling design, \path{run_upgrade.py} generates the samples, and \path{analyze_plot.py} checks them and constructs the figures. The separate script \path{reversal_diagnostic.py} selects and recomputes the saved case in Section~\ref{sec:reversal}. Its report lists all seven cases that do not follow the aggregate trend, the selection rule, the input hash, and the precision checks. Neither the sampling protocol nor any original sample is modified by this diagnostic. Figures use vector lines and embedded fonts at their intended physical dimensions. Lower plotting limits affect only the display, not the numerical data.

\paragraph{Scope of the evidence.}
The experiments use synthetic networks with bidirectional edges and fixed mechanisms. They examine the effects of WA locations and fractions, and compare finite-horizon frequencies with the consensus criterion on small networks. They do not establish monotonicity for every realization, a universal critical WA fraction, or a uniform convergence-time bound. Human opinion data are not considered.
\end{document}